\newtheorem{claim}{Claim}[section]
\title{On  Bringer-Chabanne EPIR Protocol for Polynomial Evaluation}
\abstract{
Extended private information retrieval (EPIR) was defined by
\cite{BCPT07}
at CANS'07 and generalized by
\cite{BC09}
at
AFRICACRYPT'09. In the generalized setting, EPIR allows a user to
evaluate a function on a database block such that the database
can learn neither which function has been evaluated nor on which
block the function has been evaluated and the user learns no more
information on the database blocks except for the expected result. An
EPIR protocol for evaluating  polynomials over a finite field $L$
was proposed by Bringer and Chabanne
 in \cite{BC09}.
We show that the protocol does not
satisfy the correctness requirement as they have claimed. In
particular, we show that it does not give the user the
expected result with large probability if one of the coefficients of
the polynomial to be evaluated is  primitive in $L$ and the others
belong to   the prime subfield of $L$.
}
\keywords{Extended private information retrieval, correctness}
\begin{document}

\section{Introduction}

Extended private information retrieval (EPIR) was  motivated by
privacy-preserving biometric authentication and formally defined in
\cite{BCPT07}. It enables a user to privately evaluate a fixed and
public function with two inputs, one chosen block from a database
and one additional string. Two EPIR protocols were proposed in
\cite{BCPT07}. One is
 for testing
equality and the other is for computing weighted Hamming distance.
As a  cryptographic primitive, EPIR has been generalized by
\cite{BC09} in order to attain more flexibility. In the generalized
setting, the function to be evaluated is neither fixed nor public.
Instead, it is chosen from a set of public functions by the user. A
new EPIR protocol
 in the generalized setting was  proposed in \cite{BC09}.
As noted in \cite{BCPT07}, EPIR is indeed a combination of private
informatrion retrieval \cite{CGKS95} and general secure two-party
computation \cite{Gol04}.

{\bf Related Work.} Private information retrieval (PIR) was
introduced by \cite{CGKS95}. It allows a user to retrieve a data
item from a  database  such that
 the database cannot learn which item the user is interested in.
The requirement on the privacy of the identity of the retrieved data
item is  called {\em user privacy}. The main measure of the
efficiency of a PIR protocol is its  {\em communication complexity},
i.e., the total number of bits exchanged by the user and the
database for retrieving a {\em single bit}.
 PIR protocols have been constructed in both the {\em information-theoretic} setting
  \cite{CGKS95,Amb97,BIKR02,BIK05,WY05,Yek07,Efr09,IS10,CFLWZ10} and the {\em computational}
  setting \cite{CG97,KO97,CMS99,KO00,YS01,Chang04,GR05,Lip05,GKL10}.
In an information-theoretic PIR protocol, the database learns {\em
absolutely}  no information on which item the user is interested in
even if it has unlimited computing power. On the other hand, in a
computational PIR (CPIR) protocol, the identity of the retrieved
data item is not revealed only if the database is polynomial-time
and cannot efficiently solve certain number-theoretic problems,
i.e., certain cryptographic assumptions hold. For example, the PIR
protocol of \cite{CG97} is a {\em two-database} CPIR protocol in
which each database cannot figure out which item the user is
interested in under the assumption that one way functions exist.
EPIR protocols of \cite{BCPT07,BC09} are mostly close to the {\em
single-database} CPIR protocols. The first single-database CPIR
protocol was proposed by \cite{KO97}. It achieves the user privacy
under the assumption that deciding quadratic residuosity is hard and
has communication complexity $O(N^c)$ for any small constant $c>0$,
where $N$ is the size of the database. Subsequently,  \cite{CMS99}
constructed  a single-database CPIR protocol of communication
complexity $O(\log^8(N))$  under the  $\Phi$-hiding assumption.  So
far, the most efficient single-database CPIR protocol was obtained
by \cite{GR05} under the assumption that the decision subgroup problem
is hard.  It requires the user to exchange $O(k+d)$ bits with the
database for retrieving $d$ bits, where $k\geq \log N$ is the
security parameter.
Other constructions of single-database CPIR  protocols can
be found in \cite{KO00,YS01,Chang04,GKL10}.

PIR does not provide any privacy for the database. Typically, the
user may obtain  a large number of data items in an execution of a
PIR protocol. In order to prevent the user from obtaining more than
one data item in any execution of a PIR protocol, \cite{GIKM98}
introduced the notion of {\em data privacy} and proposed
transformations from information-theoretic PIR protocols to  the
so-called symmetrically private information retrieval (SPIR)
 protocols
 which meet the data privacy. The SPIR protocols of \cite{GIKM98} are in the information-theoretic setting.
SPIR can be defined  in the computational setting as well. Following
the security definition of general secure two-party and multi-party
computation \cite{Gol04}, in the computational setting, a PIR
protocol is said to achieve  data privacy if, for any query, the
user cannot tell whether it is interacting with a real-database
which has $N$ data items or a simulator which only knows the
retrieved data item.
 Interestingly, single-database SPIR protocols in  the computational setting  are essentially
 communication-efficient 1-out-of-$N$ Oblivious transfer (OT)
 \cite{Rab81,EGL85,BCR87,GMW87,Kil88} protocols.
Oblivious transfer \cite{Rab81} is a fundamental  cryptographic
primitive, on which any secure two-party and multi-party computation
can be built  \cite{Kil88} in an unconditionally secure way. A
1-out-of-$N$ OT allows a receiver Bob to choose one of the $N$
secrets held by a sender Alice such that Alice learns no information
on Bob's choice and Bob cannot learn more except the secret he
chooses. \cite{NP99} proposed transformations from any PIR protocols
to SPIR protocols in the computational setting. Their transformation
requires only one execution of a given PIR protocol and $\log N$
executions of a 1-out-of-2 OT protocol. The notion of EPIR
\cite{BCPT07,BC09} is essentially a generalization of SPIR in the
computational setting.

EPIR is also  related to  selective private function evaluation
\cite{CIKRRW01}, oblivious polynomial evaluation  \cite{NP99} and
private keyword search \cite{FIPR05}. A selective private function
evaluation protocol \cite{CIKRRW01} allows a client to privately
evaluate a public function on the inputs  held by one or more
servers. Comparing with EPIR, the client  only decides on which
inputs the public function will be evaluated. An oblivious
polynomial evaluation protocol \cite{NP99} allows a receiver to
privately evaluate  a polynomial function on  his  input, where the
polynomial is held by a sender. Comparing with EPIR, the function to
be evaluated is not known to the receiver and the input on which the
function is evaluated is not known to the sender. A private keyword
search protocol  \cite{FIPR05} allows a client to privately search a
database with a keyword such that he learns the associated record if
the keyword is contained in the database and learns nothing
otherwise. In a sense, EPIR can  also be seen as a generalization of
the above problems.

{\bf Results.} The protocol described in Section 4.3 of \cite{BC09}
will be our main topic in this paper and termed as Bringer-Chabanne
EPIR protocol   from now on. It was claimed \cite{BC09} that the
protocol enables  a user to privately evaluate any polynomial
$F(t)\in L[t]$ on a chosen  database block $R_i$, where $L={\rm
GF}(p^n)$ is the field extension of degree $n$ of the prime field
$K={\rm GF}(p)$. We study the correctness of the Bringer-Chabanne
EPIR protocol   and show that it may fail frequently. In particular,
we show that, by executing the protocol,  the user with input
$(F(t),i)\in L[t]\times [N]$
 does not
learn the expected result (i.e., $F(R_i)$) with a large probability
if $F(t)\in \mathcal{P}$, where $
\mathcal{P}=\{f(t)=\sum_{k=0}^{d}f_kt^k: \exists~ 0\leq l\leq
d{\rm ~such~that~}f_{l}\in L{\rm~is~of~order~}
p^n-1{\rm~and~}f_k\in K{\rm ~for~every~}k\neq l\} $.

{\bf Methodology.} Our argument is by contradiction. To simplify the
argument, we first give a restricted version of the Bringer-Chabanne
EPIR protocol.  In the restricted version, the database is
deterministic and only has one block, i.e., $N=1$. We
note that if the Bringer-Chabanne EPIR protocol satisfies the
correctness requirement, then so does the restricted version. We
then show that the restricted version does not satisfy the
correctness requirement if the polynomial to be evaluated is in
$\mathcal{P}$. This result allows us to conclude that the
Bringer-Chabanne EPIR protocol   does not satisfy the correctness
requirement as \cite{BC09} has claimed.

{\bf Organization.} The remainder of this paper  is organized as
follows. In Section 2, we recall the definition and security model
of EPIR \cite{BC09}. In Section 3, we recall the Bringer-Chabanne
EPIR protocol. In Section 4, we give a restricted version of the
Bringer-Chabanne EPIR protocol   and   show that the restricted
version fails frequently if the polynomial to be evaluated is in
$\mathcal{P}$. At last, in Section 5, we conclude the paper.

\section{Preliminaries}

\subsection{Definition}

Following the definition of \cite{BC09}, a single-database EPIR
protocol is a protocol between a database $\mathcal{DB}$ who has $N$
blocks   $(R_1,\ldots, R_N)\in (\{0,1\}^{l_1})^{N}$ and a user
 $\mathcal{U}$ who wants to  evaluate  $F(R_i)$ for a function
$F\in \mathcal{F}$ and an index $i\in[N]$,  where $\mathcal{F}$ is a
 set of functions from $\{0,1\}^{l_1}$ to $\{0,1\}^{*}$ and public. Such a protocol allows
 $\mathcal{U}$ to learn $F(R_i)$ but no more information on the database blocks while $\mathcal{DB}$
learns no information on $(F,i)$.

The above definition of EPIR is a generalization of  \cite{BCPT07}
and provides the user with more flexibility of choosing the function
$F$ from a large set  $\mathcal{F}$.
 In the context of this definition, the EPIR
for testing equality  \cite{BCPT07} has  $\mathcal{F}=\{{\rm IsEqual}(\cdot, X):  X\in
\{0,1\}^{l_1}\}$, where ${\rm IsEqual}(R_i,X)=1$ if $R_i=X$ and 0
otherwise.
The
EPIR  for computing weighted Hamming distance \cite{BCPT07}
has $\mathcal{F}=\{d_w(\cdot, X):  X\in
\{0,1\}^{l_1}, w\in \mathbb{N}^{l_1}\}$, where
$d_w(R_i,X)=\sum_{j=1}^{l_1} w_j\cdot (R_i^{(j)}\oplus X^{(j)})$
 (For every $j\in[l_1]$, $R_i^{(j)}$ and $X^{(j)}$ are the $j$-th bits of $R_i$ and
$X$, respectively).

\subsection{Security Model}\label{sec:security_model}

As in \cite{BC09,BCPT07}, we denote by ${\bf retrieve}(F,i)$ the
query made by a user with input $(F,i)\in \mathcal{F}\times [N]$.
Without further notice, algorithms are assumed to be
polynomial-time. If an algorithm $\mathcal{A}$ runs in $k$
stages,  then we shall write $\mathcal{A} = (\mathcal{A}_1,\mathcal{A}_2,
\ldots, \mathcal{A}_k )$. The security is evaluated by an experiment between an
attacker and a challenger, where the challenger simulates the
protocol executions and answers the attacker's oracle queries.
 For $\mathcal{A}$ a probabilistic algorithm, we denote
by $\mathcal{A}(\mathcal{O}, {\bf retrieve})$
 the action to run $\mathcal{A}$ with access to any polynomial number of ${\bf retrieve}$
 queries
generated or answered (depending on the position of the attacker) by
the oracle $\mathcal{O}$. A function $\tau: \mathbb{Z}\rightarrow
\mathbb{R}$ is said to be {\em negilible} if for any polynomial $P$,
there is an integer $N_P$ such that $\tau(n)\leq 1/P(n)$ for every
$n\geq N_P$. If $\tau(n)$ is  negilible, then $1-\tau(n)$ is said to
be
{\em overwhelming}.\\
{\bf Correctness.} An EPIR protocol is said to be {\em correct} if
any query ${\bf retrieve}(F,i)$
 returns the correct value of $F(R_i)$ with an overwhelming probability when $\mathcal{U}$ and
 $\mathcal{DB}$ follow the protocol specification.\\
{\bf User Privacy.} Informally, an EPIR protocol is said to {\em
respect  user privacy} if for any query ${\bf retrieve}(F, i)$,
$\mathcal{DB}$ learns no information on  $(F,i)$. Formally, an EPIR
protocol is said to respect user privacy if any attacker
$\mathcal{A} = (\mathcal{A}_1, \mathcal{A}_2, \mathcal{A}_3,
\mathcal{A}_4)$, acting as a malicious database, has only a
negligible advantage $|\Pr[b^{\prime}=b]-\frac{1}{2}|$ in the
following experiment:

$\hspace{0.2cm}{\bf Exp}_{\mathcal{A}}^{\rm user\text{-}privacy} $
\vspace{-0.25cm}
$$
 \left| \begin{array}{ccl}
               (R_1,\ldots,R_N) &\leftarrow&\mathcal{A}_1(1^l)    \\
              1\leq i_0,i_1\leq N; F_0,F_1\in \mathcal{F}&\leftarrow& \mathcal{A}_2(Challenger;{\bf retrieve})      \\
              b&\leftarrow& \{0,1\}     \\
              \emptyset &\leftarrow& \mathcal{A}_3(Challenger; {\bf retrieve}(F_b,i_b))      \\
             b^{\prime} &\leftarrow& \mathcal{A}_4(Challenger;{\bf retrieve})      \\
\end{array}
\right.$$ {\bf Database Privacy.} Informally, an EPIR protocol is
said to {\em respect database privacy}
 if a malicious user $\mathcal{U}$ cannot learn more information than $F^{\prime}(R_{i^{\prime}})$ for
some $(F^{\prime}, i^{\prime})\in \mathcal{F} \times [N]$  via a
query ${\bf retrieve}$. This intuitive description  can be
formalized via simulation principle by saying that the user
$\mathcal{U}$ cannot determine whether he is interacting with a
simulator which takes  only $(i^{\prime},
F^{\prime}(R_{i^{\prime}}))$ as input, or with $\mathcal{DB}$. We
denote by $\mathcal{S}_0$ the database $\mathcal{DB}$. Formally, an
EPIR protocol is said to respect database privacy if there is a
simulator
 $\mathcal{S}_1$,  which receives  an
auxiliary input $(i^{\prime}, F^{\prime}(R_{i^{\prime}}))$ from a
{\em hypothetical oracle}
 $\mathcal{O}$ for every query
${\bf retrieve}$, such that any attacker $\mathcal{A} =
(\mathcal{A}_1,\mathcal{A}_2)$, acting as a malicious user, has only
a negligible advantage $|\Pr[b^{\prime}=b]-\frac{1}{2}|$ in the
following experiment:

$\hspace{2.6cm}{\bf Exp}_{\mathcal{A}}^{\rm database\text{-}privacy}
$ \vspace{-0.25cm}
$$
 \left| \begin{array}{ccl}
              b&\leftarrow& \{0,1\}     \\
               (R_1,\ldots,R_N) &\leftarrow&\mathcal{A}_1(1^l)    \\
             b^{\prime} &\leftarrow& \mathcal{A}_2(\mathcal{S}_b;{\bf retrieve})      \\
\end{array}
\right.$$ Remark: The hypothetical oracle $\mathcal{O}$ is assumed
to have unlimited computing resources, and
 $\mathcal{S}_1$ always learns exactly the input related
to the request made by the attacker.

\section{Bringer-Chabanne EPIR Protocol  }

The EPIR protocols for testing equality and computing weighted
Hamming distance of \cite{BCPT07} are based on a  pre-processing
technique. Specifically, the user  sends an encryption of its input
$(F,i)$ to $\mathcal{DB}$, who then computes a temporary database
which contains  an encryption of $F(R_i)$. Finally, the user
executes a
 single-database CPIR protocol with $\mathcal{DB}$ to retrieve the encryption of $F(R_i)$.
This technique does not
allow the evaluation of  generic functions and incurs heavy
computation during the computation of the temporary database. The
Bringer-Chabanne EPIR protocol   aims to
 avoid these deficiencies. It is based on  ElGamal encryption schemes  over the multiplicative groups
 of finite fields.

\subsection{ElGamal Encryption Scheme}\label{sec:elgamal}

Let $p$ be a prime and $K={\rm GF}(p)$ be the finite field of order
$p$. Let $L={\rm GF}(p^n)$ be the
 finite field of order $p^n$  and $\mathbb{G}=L^{\times}$ be its  multiplicative group  of order $q=p^n-1$
for an integer $n\geq 2$. Let $g$ be a generator of $\mathbb{G}$.
The ElGamal encryption scheme over  $\mathbb{G}$ is a triplet of  algorithms $\Pi=({\bf Gen, Enc, Dec})$,
where
\begin{enumerate}
\item ${\bf Gen}$ is a key generation algorithm which takes as input a security parameter $1^k$ and proceeds as follows:
        \begin{enumerate}
             \item  generates the parameters $p,n,q$ and $ g$;
             \item  picks $x\leftarrow \mathbb{Z}_q$  and computes $y=g^x$;
            \item   outputs $pk=(q,g,y)$ as the public key and $sk=x$ as  the secret key.
       \end{enumerate}

\item $\bf Enc$ is an encryption algorithm which takes as input a plaintext $m\in \mathbb{G}$, picks
 $r\leftarrow\mathbb{Z}_q$  and outputs   $c=(g^r,y^rm)$ as the ciphertext.
\item   ${\bf Dec}$ is a decryption algorithm which  takes as input a ciphertext $c=(c_1,c_2)\in \mathbb{G}^2$ and outputs
$c_2\cdot c_1^{-x}$.
\end{enumerate}

\subsection{Requirements on Database Blocks and Functions}\label{sec:input_domain}

Following the notations  in Section \ref{sec:elgamal}, let
$\alpha\in L$ be a primitive element of the field extension $L/K$.
Then there is a  polynomial $G(t)\in K[t]$ of degree $<n$ such that
 $G(\alpha)=g$.
Let $x\in \mathbb{Z}_q$ and  $Y(t)\in  K[t]$ be the polynomial of degree $<n$ such that
$Y(\alpha)=y=g^x$.

For the Bringer-Chabanne EPIR protocol   to be correct, it is
required
 in \cite{BC09} that
 for every $j\in[N]$, the database block $R_j$ should belong to  $\mathbb{D}$, where
$$\mathbb{D}=\{\beta\in \mathbb{G}: Y(\beta)=G(\beta)^x{\rm ~and~} G(\beta)\neq 0\}.$$
The function to be evaluated by $\mathcal{U}$ can be any polynomial
over $L$, i.e., $\mathcal{F}=L[t]$.

\begin{figure}[t]
\begin{center}
\begin{boxedminipage}{12cm}
\begin{enumerate}
\itemsep=-0.1cm
\item $\mathcal{U}$:
Generates an ElGamal key pair $(pk, sk)$, where $pk = (q, g, y), y =
g^x$, and $sk = x$ is randomly chosen from $\mathbb{Z}_q$.
$\mathcal{U}$ also sends $pk$ to let $\mathcal{DB}$ the possibility
to verify the validity of $pk$ as an ElGamal public key. In
practice, the validity of $pk$ can be certified by a TTP, and the
same $pk$ can be used by the user for all his queries.

\item $\mathcal{U}$: For any polynomial function $F: {\rm GF}(p^n) \rightarrow {\rm GF}(p^n)$
and any index $1 \leq i \leq N$, computes $C_1,\ldots , C_N$ and
sends them to $\mathcal{DB}$ where
\begin{enumerate}
\item[-] $C_i = {\bf Enc} (F (\alpha) + r) = (G(\alpha)^{r_i} , Y (\alpha)^{r_i} (F (\alpha) + r))$
\item[-] and $C_j = {\bf Enc} (1) = (G(\alpha)^{r_j} , Y (\alpha)^{r_j} )$ for all  $j\neq i$,
\end{enumerate}
with randomly chosen $r\in {\rm GF}(p), r_j \in \mathbb{Z}_q (1\leq
j \leq N)$. Each $C_j$ can be written as $C_j = (V_j (\alpha),W_j
(\alpha))$ where $V_j$ and $W_j$ are polynomial over ${\rm GF}(p)$
of degree at most $n-1$.
\item $\mathcal{DB}$: After reception of the $C_j$, checks that they are nontrivial
ElGamal ciphertexts and computes $C_j (R_j) = (V_j (R_j),W_j (R_j))$
by replacing each occurrence of $\alpha$ (resp. $\alpha^{l}$ for
all power $l< n$) with $R_j$ (resp. with $R^{l}_j$).
\item  $\mathcal{DB}$: {Performs the product of all the $C_j$ together with a random encryption
of 1, say ${\bf Enc}(1) = (g^{r^{\prime}}, y^{r^{\prime}})$, sends
${\bf Enc}(1)\times \prod_{j=1}^{N}C_{j}(R_j)=\big( g^{r^{\prime}}\prod_{j=1}^{N}G(R_j)^{r_j},
y^{r^{\prime}}\big(\prod_{j=1}^{N}Y(R_j)^{r_j}\big)(F(R_i)+r)\big)$ to $\mathcal{U}$.}
\item $\mathcal{U}$:  Outputs ${\bf Dec}(sk, {\bf Enc}(1) \prod_{j=1}^{N}C_{j}(R_j))-r$ as $F(R_i)$.
\end{enumerate}
\end{boxedminipage}
\end{center}
\caption{Bringer-Chabanne EPIR protocol}
\label{fig:EPIR}
\end{figure}

\subsection{Bringer-Chabanne EPIR Protocol  }

Figure \ref{fig:EPIR} is the Bringer-Chabanne EPIR protocol, where  most notations are adopted from
 Section \ref{sec:elgamal} and Section \ref{sec:input_domain}.
The authors of the protocol expect to embed the description of the
polynomial $F(t)\in L[t]$ chosen by $\mathcal{U}$ into an ElGamal
ciphertext such that it can be evaluated by $\mathcal{DB}$ in an
oblivious way.

\newpage

The correctness of the Bringer-Chabanne EPIR protocol was claimed in \cite{BC09} as follows.
\begin{claim}\label{clm:correctness}
{\em (Section 4.4 of \cite{BC09})} A query {\em (say ${\bf
retrieve}(F,i)$)} gives the expected result
 {\em (i.e., $F(R_i)$)} as soon as there is no index $j$ for which one
of the values $G(R_j)$ or $Y(R_j)$ is zero, which may occur only
with a negligible probability in practice, leading to the
correctness of the EPIR protocol.
\end{claim}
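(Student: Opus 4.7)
The plan is to trace through the four rounds of the protocol and argue, one ciphertext component at a time, that the symbolic substitution $\alpha\mapsto R_j$ that $\mathcal{DB}$ performs in Step~3 reproduces the arithmetic that ElGamal decryption expects in Step~5. Let $V_j,W_j\in K[t]$ denote the unique representatives of degree $<n$ with $V_j(\alpha)=G(\alpha)^{r_j}$, with $W_i(\alpha)=Y(\alpha)^{r_i}(F(\alpha)+r)$, and with $W_j(\alpha)=Y(\alpha)^{r_j}$ for $j\neq i$. My goal is to establish the substitution identities
\[
V_j(R_j)=G(R_j)^{r_j},\quad W_j(R_j)=Y(R_j)^{r_j}\ (j\neq i),\quad W_i(R_i)=Y(R_i)^{r_i}\bigl(F(R_i)+r\bigr).
\]
Granted these, the product ${\bf Enc}(1)\cdot\prod_jC_j(R_j)$ becomes an ElGamal ciphertext of $F(R_i)+r$ under the ``local'' public key $(G(R_i),Y(R_i))$ with secret key $x$---here the hypothesis $R_j\in\mathbb{D}$ is what delivers $Y(R_j)=G(R_j)^x$---so Step~5 correctly outputs $F(R_i)$.

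The main tool I would use is the Frobenius automorphism $\pi\colon\xi\mapsto\xi^p$ of $L$, which fixes $K$ pointwise. For the argument to go through I want to locate each $R_j$ inside the Frobenius orbit of $\alpha$, i.e., $R_j=\alpha^{p^{k_j}}$ for some $0\le k_j<n$; this is the natural way to land in $\mathbb{D}$, because for any $H\in K[t]$ the identity $H(t)^{p^{k}}=H(t^{p^{k}})$ is a polynomial identity in $K[t]$ (the coefficients of $H$ are Frobenius-fixed), and therefore $H(\alpha)^{p^{k_j}}=H(\alpha^{p^{k_j}})=H(R_j)$. Applied with $H=V_j$ this gives $V_j(R_j)=V_j(\alpha)^{p^{k_j}}=G(\alpha)^{r_jp^{k_j}}=G(R_j)^{r_j}$, and the identical computation with $H=W_j$ for $j\neq i$ yields $W_j(R_j)=Y(R_j)^{r_j}$.

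The case $j=i$ is where the argument concentrates, and where I expect it to reach its main obstacle. The same Frobenius identity, now applied to $H=W_i\in K[t]$, gives
\[
W_i(R_i)=W_i(\alpha)^{p^{k_i}}=Y(\alpha)^{r_ip^{k_i}}\bigl(F(\alpha)+r\bigr)^{p^{k_i}}=Y(R_i)^{r_i}\bigl(F(\alpha)^{p^{k_i}}+r\bigr),
\]
using $r\in K$, so the desired identity $W_i(R_i)=Y(R_i)^{r_i}(F(R_i)+r)$ reduces to $F(\alpha)^{p^{k_i}}=F(\alpha^{p^{k_i}})=F(R_i)$. This is automatic whenever $F\in K[t]$, since then Frobenius commutes with polynomial evaluation, and the proof is complete in that restricted setting. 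But the claim is stated for arbitrary $F\in L[t]$, and as soon as $F$ has a coefficient in $L\setminus K$, Frobenius no longer commutes with the evaluation of $F$; the discrepancy $F(\alpha)^{p^{k_i}}-F(R_i)$ then survives into the decrypted plaintext and is not masked by the user-chosen $r\in K$. I would therefore flag this step rather than try to patch it: the argument proves the claim only under the additional hypothesis $F\in K[t]$, and I would expect Section~4 to exploit exactly this gap for polynomials with a coefficient drawn from $L\setminus K$.
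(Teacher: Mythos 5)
Your proposal does not, and could not, prove the statement---and that is the right outcome, because this ``Claim'' is not something the paper proves. It is quoted verbatim from Bringer--Chabanne, and the entire body of the paper is devoted to refuting it. So the correct point of comparison is not a proof of the claim but the paper's refutation, and on that score your diagnosis is exactly right: the substitution $\alpha\mapsto R_j$ commutes with ElGamal decryption precisely because Frobenius fixes $K$ pointwise, and the argument collapses at the single point where a coefficient of $F$ lies in $L\setminus K$, since then $F(\alpha)^{p^{k_i}}\neq F(\alpha^{p^{k_i}})$ in general. This is exactly the mechanism the paper exploits. The counterexample in Section~4.2 takes $R=g^2+g=\alpha^{2^2}$ (so $k_i=2$) and $F(t)=g$, and the erroneous output $g^2+g=g^4=F(\alpha)^{2^{k_i}}$ is precisely what your formula predicts. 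Lemma~\ref{lem:rootQ} and Lemma~\ref{lem:F_t_in_POLY_are_good} then quantify the failure by showing that, when $F$ has a primitive coefficient and the rest lie in $K$, the polynomial $E(t)=W(t)+V(t)^x(F(t)+r)$ has at most one root in each Frobenius orbit ${\bf D}_u$, so the discrepancy you identified survives for all but one conjugate in each orbit; this yields the lower bounds on $\eta(n,g,g)$ and Theorems~\ref{thm:BC_is_not_correct_for_g}--\ref{thm:BC_is_not_correct_for_gg}.

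One secondary gap in the positive half of your argument: you locate each $R_j$ in the Frobenius orbit of $\alpha$ and call this ``the natural way to land in $\mathbb{D}$,'' but $\mathbb{D}$ is generally strictly larger. The paper shows $\mathbb{D}_{n,g,x}=\bigcup_{u\in U_x}{\bf D}_u$ is a union of possibly several cyclotomic orbits (always containing ${\bf D}_1$, by Lemma~\ref{lem:Unotempty}, but often more). For a block $R$ lying in some other orbit ${\bf D}_u$ with $u\neq 1$, the congruences $V(t)\equiv G(t)^s$ and $W(t)\equiv Y(t)^s(F(t)+r)$ modulo ${\rm Min}_\alpha(t)$ say nothing about $V(R)$ and $W(R)$, so even your restricted positive conclusion for $F\in K[t]$ is only established for blocks conjugate to $\alpha$. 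The paper's failure-probability analysis accounts for all orbits in $U_x$, which is why its lower bounds on $\epsilon(n,g,x,g)$ must control the ratio $|U_x|/|\mathbb{D}_{n,g,x}|$ rather than just the single orbit of $\alpha$.
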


\section{On the Incorrectness of Bringer-Chabanne EPIR Protocol}

In this section, we show that Bringer-Chabanne EPIR protocol does not satisfy the
correctness requirement defined in Section 2.2.   To simplify the
argument, we  give a  restricted version of Bringer-Chabanne EPIR protocol in which
$\mathcal{DB}$ is deterministic and  $N=1$. The restricted version
satisfies the correctness requirement as long as  Bringer-Chabanne EPIR protocol
 satisfies the correctness requirement. Then we turn to study  the incorrectness of the restricted version.

\subsection{Restricted Version}
At step (iv) of  the Bringer-Chabanne EPIR protocol, $\mathcal{DB}$ is
randomizing the product  $ \prod_{j=1}^{N}C_{j}(R_j)$ and sending
 ${\bf Enc}(1) \cdot \prod_{j=1}^{N}C_{j}(R_j)$ to the user.
We  note that the user could have computed the same output if $\mathcal{DB}$ merely
 sends  $ \prod_{j=1}^{N}C_{j}(R_j)$.
Therefore, we can safely modify step (iv) such that $\mathcal{DB}$
merely sends  $ \prod_{j=1}^{N}C_{j}(R_j)$  to $\mathcal{U}$
 with no impact on the correctness of the protocol. Let $i=N=1$.
  Then we have the restricted version (see Figure \ref{fig:RV}).

\begin{figure}[ghp]
\begin{center}
\begin{boxedminipage}{12cm}
\begin{enumerate}
\itemsep=-0.1cm
\item $\mathcal{U}$:
Generates an ElGamal key pair $(pk, sk)$, where $pk = (q, g, y), y =
g^x$, and $sk = x$ is randomly chosen from $\mathbb{Z}_q$.
$\mathcal{U}$ also sends $pk$ to let $\mathcal{DB}$ the possibility
to verify the validity of $pk$ as an ElGamal public key. In
practice, the validity of $pk$ can be certified by a TTP, and the
same $pk$ can be used by the user for all his queries.
\item $\mathcal{U}$: For any polynomial function $F: {\rm GF}(p^n) \rightarrow {\rm GF}(p^n)$,
computes $C = {\bf Enc} (F (\alpha) + r) = (G(\alpha)^{s} , Y
(\alpha)^{s} (F (\alpha) + r))$ and sends it to $\mathcal{DB}$ where
 $r\in {\rm GF}(p), s \in \mathbb{Z}_q$ are randomly chosen. The ciphertext $C$ can
be written as $C = (V(\alpha),W(\alpha))$ where $V$ and $W$ are
polynomials over ${\rm GF}(p)$ of degree at most $n-1$.
\item$\mathcal{DB}$: After reception of $C$ , checks that it is a  nontrivial
ElGamal ciphertext and computes $C(R) = (V(R),W(R))$ by replacing
each occurrence of $\alpha$ (resp. $\alpha^{l}$ for all power
$l< n$) with $R$ (resp. with $R^{l}$).
\item $\mathcal{DB}$: Sends $C(R)$ to $\mathcal{U}$.
\item $\mathcal{U}$:  Outputs  ${\bf Dec}(sk, C(R))-r$ as $F(R)$.
\end{enumerate}
\end{boxedminipage}
\end{center}
\caption{A restricted version of Bringer-Chabanne EPIR protocol}
\label{fig:RV}
\end{figure}

Clearly, if Claim \ref{clm:correctness} holds, then we have:
\begin{claim}\label{clm:correctness_restricted}
A query {\em (say ${\bf retrieve}(F,1)$)} in an execution of
the restricted version  gives
 $\mathcal{U}$ the expected result {\em (i.e., $F(R)$)} for any  $R\in \mathbb{G}$
 satisfying $Y(R)=G(R)^x$ and $G(R)\neq 0$.
\end{claim}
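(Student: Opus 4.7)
The plan is to show that the restricted version of Figure~\ref{fig:RV} is obtained from the Bringer-Chabanne protocol of Figure~\ref{fig:EPIR} by two innocuous modifications: (a) specializing to $N=i=1$, and (b) dropping the fresh randomizing factor ${\bf Enc}(1)$ that the database multiplies into the response at step (iv). Granted Claim~\ref{clm:correctness}, the restricted version is correct as soon as these two changes preserve the user's final output $\mathbf{Dec}(sk,\cdot)-r$ and as soon as the hypotheses of Claim~\ref{clm:correctness_restricted} imply those of Claim~\ref{clm:correctness}.

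The specialization to $N=1$ is immediate: any claim that holds for all $N$ in particular holds for $N=1$. Under this specialization, the non-vanishing hypothesis of Claim~\ref{clm:correctness} becomes ``$G(R)\neq 0$ and $Y(R)\neq 0$'', and the assumption of Claim~\ref{clm:correctness_restricted} is precisely $G(R)\neq 0$ together with $Y(R)=G(R)^{x}$; since $G(R)\neq 0$ entails $G(R)^{x}\neq 0$, the hypothesis $Y(R)\neq 0$ of Claim~\ref{clm:correctness} is automatically satisfied.

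The main (routine) verification is that removing ${\bf Enc}(1)$ at step (iv) has no effect on the user's output. Write ${\bf Enc}(1)=(g^{r'},y^{r'})$ with $y=g^{x}$, and write $\prod_{j=1}^{N} C_{j}(R_{j})=(A,B)$ where, by step (iii),
$$
A \;=\; \prod_{j=1}^{N} G(R_{j})^{r_{j}}, \qquad
B \;=\; \Bigl(\prod_{j=1}^{N} Y(R_{j})^{r_{j}}\Bigr)\bigl(F(R_{i})+r\bigr).
$$
Under the assumption that every $R_{j}$ satisfies $Y(R_{j})=G(R_{j})^{x}$, one has $A^{x}=\prod_{j} Y(R_{j})^{r_{j}}$, so $\mathbf{Dec}(sk,(A,B))=B\cdot A^{-x}=F(R_{i})+r$. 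The randomized ciphertext sent in the original protocol is $(g^{r'}A,\ y^{r'}B)$, whose decryption equals $y^{r'}B\cdot(g^{r'}A)^{-x}=B\cdot A^{-x}$ as well. Hence the value of $\mathbf{Dec}(sk,\cdot)-r$ computed in step (v) is identical in the two versions.

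The ``hard part'' here is conceptual rather than computational: one must be careful to note that dropping ${\bf Enc}(1)$ can in general harm database privacy (it was inserted precisely to re-randomize the response), but it cannot harm correctness, because ElGamal decryption is invariant under multiplication by any encryption of the identity $1$. Combining the three observations above, Claim~\ref{clm:correctness} implies Claim~\ref{clm:correctness_restricted}, so any failure of Claim~\ref{clm:correctness_restricted} established in the sequel will transfer back to the full protocol.
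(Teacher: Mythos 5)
Your proposal is correct and follows essentially the same route as the paper: the paper likewise observes that multiplying by ${\bf Enc}(1)$ cannot change the value of ${\bf Dec}(sk,\cdot)$ and then specializes to $N=i=1$, so Claim~\ref{clm:correctness_restricted} is an immediate consequence of Claim~\ref{clm:correctness}. You merely spell out the decryption-invariance computation and the matching of hypotheses, which the paper leaves as a one-sentence remark.
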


\subsection{Counterexample}\label{sec:counterexample}
We show that Claim \ref{clm:correctness_restricted} does not holds
by a counterexample. Let $p=2, n=3, K={\rm GF}(2), L={\rm GF}(2^3)$
and $\mathbb{G}=L^{\times}$. Let $\alpha=g\in \mathbb{G}$ be a
generator of $\mathbb{G}$ with minimal polynomial ${\rm
Min}_{g}(t)=t^3+t+1\in K[t]$. Figure \ref{fig:exeRV} is an  execution of
the restricted version  which does not give $\mathcal{U}$ the expected result.

\begin{figure}[ghp]
\begin{center}
\begin{boxedminipage}{12cm}
\begin{enumerate}
\itemsep=-0.1cm
\item $\mathcal{U}$: Picks a private key $sk=x=6\in \mathbb{Z}_7$, sets $y=g^2 + 1$ and $pk=(7,g,y)$.
$(pk,sk)$ is a pair of public and private keys for  the ElGamal
encryption scheme over group $\mathbb{G}$.
 $\mathcal{U}$ sends $pk$ to
$\mathcal{DB}$ such that $\mathcal{DB}$ can verify the validity of
$pk$ as an ElGamal public key. Clearly, $g=G(\alpha)$ and
$y=Y(\alpha)$ for polynomials $G(t)=t, Y(t)=t^2+1\in K[t]$ of degree
less
 than 3.
The field elements $R\in L$ which satisfy  equality $Y(R)=G(R)^x$
are $g,g^2$ and $g^2+g$.
\item  $\mathcal{U}$:  For a polynomial function $F(t)=g\in L[t]$, takes $s=6\in \mathbb{Z}_7, r=1\in K$
and computes the ciphertext $C = {\bf Enc} (F (\alpha) + r) =
(G(\alpha)^{s} , Y (\alpha)^{s} (F (\alpha) + r))=(g^6,
(g^2+1)^6(g+1))=(g^2+1,g^2+g)$ and sends it to $\mathcal{DB}$.
Clearly,  we have that $V(t)=t^2+1$ and $W(t)=t^2+t$.
\item  $\mathcal{DB}$: Sets the database block to be $R=g^2+g\in \mathbb{G}$.  After receiving the
 ciphertext $C=(g^2+1,g^2+g)$ from $\mathcal{U}$, $\mathcal{DB}$ checks that $C$ is a  nontrivial
ElGamal ciphertext and computes $C(R) =
(V(R),W(R))=(R^2+1,R^2+R)=(g+1,g^2)$ by replacing each occurrence of
$\alpha$ (resp. $\alpha^{l}$ for all power $l< n$) with $R$
(resp. with $R^{l}$).
\item  $\mathcal{DB}$: Sends $C(R)=(g+1,g^2)$ to $\mathcal{U}$.
\item $\mathcal{U}$: Outputs ${\bf Dec}(sk,C(R))-r=g^2+g$ as $F(R)$, which is absurd (since $F(R)=g$).
\end{enumerate}
\end{boxedminipage}
\end{center}
\caption{An execution of the restricted version}
\label{fig:exeRV}
\end{figure}

\subsection{Failure Probability}\label{sec:notations}

We have seen that the restricted version  may not give $\mathcal{U}$ the expected
result in Section \ref{sec:counterexample}. However, given the counterexample, we cannot conclude that
the Bringer-Chabanne EPIR protocol does not satisfy the correctness requirement defined in
Section \ref{sec:security_model}.
In fact, an EPIR protocol is said to be correct as long as it always gives $\mathcal{U}$ the expected
result for any fixed input $(F(t),i)\in L[t]\times [n]$ {\em except with a negligible probability}.
In other words, as a collection of probabilistic algorithms,
an EPIR protocol is {\em allowed to fail with a negligible probability}.
Therefore, to show that the Bringer-Chabanne EPIR protocol
does not satisfy the correctness requirement, it is necessary to compute the failure probability
of the protocol, i.e.,
the probability that
the protocol  does not give $\mathcal{U}$ the expected result.

In this section, we  study the
failure  probability of the restricted version. We show,
through  experimental results,  that the restricted version  does fail with large
probability for certain choices of $F(t)$ (e.g.,  $F(t)=g$).

From now on, we fix $p=2$ to be the characteristic of all related
finite fields. However, we stress that our methodology is applicable
to any characteristic
 $p$.
Following the notations of Section \ref{sec:elgamal} and Section 
\ref{sec:input_domain}, let $K={\rm GF}(2)$  and
$L={\rm GF}(2^n)$ be the extension of $K$ of degree $n$ for an integer  $n\geq 2$.
  Let $\mathbb{G}=L^{\times}$ be the multiplicative group
of $L$ of order $q=2^n-1$ and $g$ be a generator of $\mathbb{G}$.
W.l.o.g., we suppose $\alpha=g$. Then $G(t)=t\in K[t]$ is the
polynomial of degree less than $n$ such that $G(\alpha)=g$.
 For every $x\in \mathbb{Z}_q$, let $Y(t)\in K[t]$ be the polynomial of degree less than $n$ such that
  $Y(\alpha)=y=g^x$.
We define
\begin{equation*}\label{eqn:D_t}
D(t)=G(t)^x+Y(t)=t^x+Y(t)\in K[t].
\end{equation*}
Then the set of database blocks which satisfy the requirements
imposed by Claim \ref{clm:correctness_restricted} (or in Section
\ref{sec:input_domain})  is
\begin{equation*}\label{eqn:D_ngx}
\mathbb{D}_{n,g,x}=\{\beta\in \mathbb{G}| D(\beta)=0\}.
\end{equation*}

We say that an execution of the restricted version  is {\em parameterized} by
$(n,g,x,F,\\s,r,R)$ if $x\in \mathbb{Z}_q, F(t)\in L[t], s\in
\mathbb{Z}_q, r\in K$ and $R\in \mathbb{D}_{n,g,x}$ are the private
key, the polynomial to be evaluated,  the randomness used at step (ii) of the restricted version 
and  the database block held by $\mathcal{DB}$,  respectively.
 Let $V(t), W(t)\in K[t]$ be the polynomials of degree less than $n$
such that $V(g)=g^s$ and $W(g)=y^s(F(g)+r)$. Then  the
execution of the restricted version  parameterized by $(n,g,x,F,s,r,R)$ gives
$\mathcal{U}$ the expected result {if and only if} $V(R)\neq 0$ and
$E(R)=0$, where
\begin{equation}\label{eqn:Q_t}
E(t)=W(t)+V(t)^x(F(t)+r).
\end{equation}
For an  execution of  the restricted version   parameterized by
$(n,g,x,F,s,r,R)$,  we define
\begin{equation*}\label{eqn:H_xsrFR}
{\bf H}_{x,s,r,{\scriptscriptstyle  F, R}}=
\begin{cases}
1 & \text{if $V(R)\neq 0$ and $E(R)=0$,}
\\
0 &\text{otherwise.}
\end{cases}
\end{equation*}
Then the execution  fails if and only if ${\bf H}_{x,s,r,{\scriptscriptstyle  F,
R}}=0$. Therefore, the probability that an execution of the restricted version
fails when $x\in \mathbb{Z}_q$ is the private key and  $F(t)\in
L[t]$  is the polynomial
 chosen by $\mathcal{U}$ is exactly
\begin{equation*}\label{eqn:epsilon}
\epsilon(n,g,x,F)=\Pr\left[s\leftarrow \mathbb{Z}_q, r\leftarrow  K, R\leftarrow
\mathbb{D}_{n,g,x}: {\bf H}_{x,s,r,{\scriptscriptstyle  F, R}}=0\right].
\end{equation*}
Since $s,r$ and $R$ are uniformly distributed, we have that
\begin{equation}\label{eqn:epsilon}
\epsilon(n,g,x,F)=\frac{\displaystyle\sum_{s\in
\mathbb{Z}_q}\sum_{r\in K} \sum_{R \in \mathbb{D}_{n,g,x}} (1-{\bf
H}_{x,s,r,{\scriptscriptstyle  F,R}})}{2q\cdot  |\mathbb{D}_{n,g,x}|}.
\end{equation}
The probability that the restricted version   fails when $F(t)\in L[t]$ is the
polynomial
 chosen by $\mathcal{U}$
is exactly
\begin{equation}\label{eqn:eta}
\eta(n,g,F)=\frac{1}{q} \sum_{x\in \mathbb{Z}_q}\epsilon(n,g,x,F).
\end{equation}

The   probabilities $\eta(n,g,F)$ for $2\leq n\leq 9$ and $F(t)=g$
are quite large and  enumerated in Table \ref{table:failprob}.
\begin{table}[ghp]
\begin{center}
\begin{tabular}{|c|c|c||c|c|c|}\hline
$n$ & ${\rm Min}_g(t)$   & $\eta(n,g,g)$  & $n$ & ${\rm Min}_g(t)$   & $\eta(n,g,g)$
\\ \hline \hline
2 & $t^2+t+1$    &0.61111    & 6 & $t^6+t^4+t^3+t+1$          & 0.87719         \\ \hline
3 & $t^3+t+1$    & 0.74271   & 7 & $t^7+t+1$                        & 0.87895   \\ \hline
4 & $t^4+t+1$    & 0.81537   & 8 & $t^8+t^4+t^3+t^2+1$      & 0.89809           \\ \hline
5 & $t^5+t^2+1$  & 0.83630   & 9 & $t^9+t^4+1$                    & 0.90358     \\ \hline
\end{tabular}
\end{center}
\caption{Failure probability}
\label{table:failprob}
\end{table}

\subsection{Bringer-Chabanne EPIR Protocol Fails Frequently When $F(t)=g$}\label{sec:BC_EPIR_fails_frequently_for_g}

In this section, we show that the restricted version  fails with large
probability when $F(t)=g$. Specifically, for every integer $n\geq
2$, we give lower bound on $\eta(n,g,g)$.

We follow the notations in Section \ref{sec:notations}. For every
$j\in \mathbb{Z}_q$, the set  ${\bf C}=\{j\cdot 2^k\bmod q|
k=0,1,2,\cdots \}$ is called a {\em cyclotomic coset} mod $q$.   By
default,  ${\bf C}$ is  represented by the smallest  number $u\in
{\bf C}$ and denoted as
\begin{equation*}
{\bf C}_u= \left\{j\cdot 2^k\bmod q| k=0,1,2,\cdots \right\}.
\end{equation*}
The number $u$ is called the {\em coset representative} of ${\bf
C}$. Clearly, all distinct cyclotomic cosets mod $q$ are pairwise
disjoint and form a partition of $\mathbb{Z}_q$, i.e.,
$\mathbb{Z}_q=\bigcup_{u\in U} {\bf C}_u$, where $U$ is the set of
coset representatives of all distinct cyclotomic cosets mod $q$. For
every positive integer $d$, we denote by $N_2(d)$ the  number of
monic irreducible polynomials of degree $d$ in $K[t]$.
\begin{lemma}\label{lem:cyc}
{\em  (Lidl and Niederreiter \cite{LN97})}
The following statements hold:
\begin{enumerate}
\item For every $u\in U$, the cardinality of ${\bf C}_u$ is a divisor of
$n$.
\item For every positive integer $d|n$, the number of cyclotomic cosets
mod $q$ of cardinality $d$ is $N_2(d)$.
\item  For  every integer $d\geq 2$, we have that
$\displaystyle N_2(d)\leq \frac{1}{d}(2^d-2).$
\end{enumerate}
\end{lemma}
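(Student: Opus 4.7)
The plan is to treat the three parts in sequence. For (1), $|{\bf C}_u|$ is by definition the smallest positive integer $k$ with $u\cdot 2^k\equiv u\pmod q$; since $q=2^n-1$ the integer $n$ itself satisfies $2^n\equiv 1\pmod q$ and hence $u\cdot 2^n\equiv u\pmod q$, so the minimal such $k$ must divide $n$. This is the shortest step.

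For (2), I would construct the correspondence between cyclotomic cosets and Galois orbits in $\mathbb{G}$ via $u\mapsto\alpha^u$. The Frobenius $z\mapsto z^2$ over $K$ sends $\alpha^u$ to $\alpha^{2u}$, so the Galois orbit of $\alpha^u$ is exactly $\{\alpha^j:j\in{\bf C}_u\}$. By standard Galois theory this orbit coincides with the root set of ${\rm Min}_{\alpha^u}(t)\in K[t]$, so that minimal polynomial has degree $|{\bf C}_u|=d$, which divides $n$ by (1). Conversely, every monic irreducible polynomial in $K[t]$ of degree $d\mid n$ other than $t$ itself splits in ${\rm GF}(2^d)\subseteq L$ and its nonzero roots correspond to a single cyclotomic coset of cardinality $d$. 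Matching these up yields the claimed count.

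For (3), the key identity is
\[
\sum_{e\mid d} e\cdot N_2(e)=2^d,
\]
which follows by partitioning the $2^d$ elements of ${\rm GF}(2^d)$ according to the degrees of their minimal polynomials over $K$ and noting that each such polynomial of degree $e$ contributes exactly $e$ roots. For $d\geq 2$, the $e=1$ term contributes $1\cdot N_2(1)=2$ (namely $t$ and $t+1$), so dropping the remaining nonnegative contributions gives $d\cdot N_2(d)\leq 2^d-2$, which is the stated bound.

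The only delicate point is bookkeeping at $d=1$ in (2): the single cyclotomic coset $\{0\}\subseteq \mathbb{Z}_q$ accounts for the minimal polynomial of $\alpha^0=1$, namely $t+1$, while the irreducible $t\in K[t]$ arises from no coset mod $q$. One therefore has to state (2) with the implicit understanding that the $d=1$ accounting differs by $1$; fortunately, the bound in (3) is only invoked later for $d\geq 2$, so this boundary irregularity has no impact downstream.
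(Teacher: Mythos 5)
The paper does not prove this lemma at all --- it is cited wholesale to Lidl and Niederreiter --- so there is no internal argument to compare against; what matters is whether your self-contained proof is sound, and it is. Part (1) via the minimality-plus-division argument, part (2) via the bijection $u\mapsto\alpha^u$ between cyclotomic cosets and Frobenius orbits (equivalently, root sets of monic irreducibles of degree $d\mid n$), and part (3) via the partition identity $\sum_{e\mid d}eN_2(e)=2^d$ with the $e=1$ term stripped out are all correct and are the standard route one would find in \cite{LN97}. Your closing remark is the most valuable part: statement (2) as literally written is false at $d=1$, since the only cardinality-one coset is $\{0\}$ (as $2u\equiv u\pmod q$ forces $u=0$) while $N_2(1)=2$; the irreducible $t$ has no nonzero root and so corresponds to no coset. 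The paper silently works with the correct count anyway --- its definition of $\Psi_n$ imposes $0\leq z_1\leq 1$ rather than $z_1\leq N_2(1)$, and part (3) is only ever invoked for $d\geq 2$ --- so the slip is harmless downstream, but you are right to flag it, and a careful restatement of (2) would restrict to $d\geq 2$ or replace $N_2(1)$ by $1$.
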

For every $u\in U$, we denote by
\begin{equation*}
{\bf D}_u=\left\{g^j| j\in {\bf C}_u\right\}
\end{equation*}
 the set of field elements  in $L$ which share the same {\em minimal polynomial} over $K$ with  $g^u$.
For every $x\in \mathbb{Z}_q$, it is clear that there  is  a subset
$U_x\subseteq U$ of coset representatives such that
\begin{equation}\label{eqn:database_elt}
\mathbb{D}_{n,g,x}=\bigcup_{u\in U_x}{\bf D}_u.
\end{equation}
\begin{lemma}\label{lem:Unotempty}
 For every  $x\in \mathbb{Z}_q$,  we have that $1 \in U_x$.
\end{lemma}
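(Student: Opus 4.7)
The plan is to verify directly that $g$ itself lies in $\mathbb{D}_{n,g,x}$ and then use the fact that $D(t)\in K[t]$ forces the root set of $D$ to be closed under the Frobenius automorphism, so that the entire conjugate class $\mathbf{D}_1$ of $g$ lies in $\mathbb{D}_{n,g,x}$.

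First I would unwind the definitions of $G$, $Y$ and $D$ at the point $\beta = g$. Since we have fixed $\alpha = g$ and $G(t) = t$, the value $G(g)^x$ is simply $g^x$. On the other hand, $Y(t)\in K[t]$ was defined to be the unique polynomial of degree less than $n$ with $Y(\alpha) = y = g^x$, so $Y(g) = g^x$ as well. Working in characteristic $p=2$, the two terms cancel:
\begin{equation*}
D(g) \;=\; G(g)^x + Y(g) \;=\; g^x + g^x \;=\; 0.
\end{equation*}
Thus $g \in \mathbb{D}_{n,g,x}$.

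Next I would observe that $D(t) \in K[t]$, so its root set in $L$ is invariant under the Frobenius map $\sigma : \beta \mapsto \beta^{2}$ (apply $\sigma$ to the identity $D(g) = 0$ and use $D^{\sigma} = D$). Iterating gives $D(g^{2^{k}}) = 0$ for every $k \geq 0$, i.e., every element of
\begin{equation*}
\mathbf{D}_1 \;=\; \bigl\{g^{j} : j \in \mathbf{C}_1\bigr\} \;=\; \bigl\{g, g^{2}, g^{4}, \ldots, g^{2^{n-1}}\bigr\}
\end{equation*}
is a root of $D$. Hence $\mathbf{D}_1 \subseteq \mathbb{D}_{n,g,x}$, and comparing with the decomposition $\mathbb{D}_{n,g,x} = \bigcup_{u \in U_x} \mathbf{D}_u$ in equation~(\ref{eqn:database_elt}), together with the disjointness of distinct sets $\mathbf{D}_u$, this forces $1 \in U_x$.

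There is really no difficult step here; the result is almost immediate from the definition of $Y$. The only tiny subtlety worth spelling out in the write-up is the Frobenius-closure argument guaranteeing that \emph{all} of $\mathbf{D}_1$ (not just the single element $g$) lies in $\mathbb{D}_{n,g,x}$, which is what is needed to conclude membership of the coset representative $1$ in $U_x$ rather than merely $g \in \mathbb{D}_{n,g,x}$.
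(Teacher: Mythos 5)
Your proposal is correct and follows exactly the paper's (one-line) argument: the paper also deduces the lemma from the two facts $D(g)=0$ and $D(t)\in K[t]$, which you have simply spelled out in full, including the Frobenius-closure step that the paper leaves implicit. Nothing to change.
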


\begin{proof}
It follows from the fact that $D(t)\in K[t]$ and $D(g)=0$.
\end{proof}

Due to (\ref{eqn:Q_t}), $E(t)$ is  determined by the parameters
$g\in \mathbb{G},x\in\mathbb{Z}_q, F(t)\in L[t], s\in \mathbb{Z}_q$
and $r\in K$. Next lemma shows that $E(t)$ and $D(t)$ only share a
very
 small number of roots in $L$
when $F(t)=g$.
\begin{lemma}\label{lem:rootQ}
Suppose  $F(t)=g$. Then for every $x\in \mathbb{Z}_q, u\in U_x, s\in
\mathbb{Z}_q$ and $ r\in K$, either $V(\beta)=0$ for every $\beta\in
{\bf D}_u$ or $E(t)$ has at most one root in ${\bf D}_u$.
\end{lemma}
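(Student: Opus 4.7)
The plan is to argue by contradiction: assuming the first alternative fails, I will show that $E(t)$ can have at most one root in ${\bf D}_u$. Suppose there is some $\beta_0 \in {\bf D}_u$ with $V(\beta_0) \neq 0$. Since $V \in K[t]$ and $L$ has characteristic $2$, the Frobenius automorphism $\phi(\beta) = \beta^2$ satisfies $V(\phi(\beta)) = V(\beta)^2$, so $V$ is either identically zero on ${\bf D}_u$ or nowhere zero on ${\bf D}_u$. The latter case must hold, and hence the quotient $W(\beta)/V(\beta)^x$ is well-defined for every $\beta \in {\bf D}_u$. Since $F(t) = g$ is a constant in $L$, the condition $E(\beta) = W(\beta) + V(\beta)^x(g+r) = 0$ is then equivalent to $g + r = W(\beta)/V(\beta)^x$.

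I would then suppose, for contradiction, that $E(t)$ has two distinct roots $\beta_1, \beta_2 \in {\bf D}_u$, and write $\beta_2 = \beta_1^{2^k}$ for some positive integer $k$. Because $\phi$ acts on ${\bf D}_u$ as a cyclic permutation of length $|{\bf C}_u|$, the distinctness of $\beta_1$ and $\beta_2$ forces $|{\bf C}_u| \nmid k$; combined with $|{\bf C}_u| \mid n$ from Lemma \ref{lem:cyc}, this yields $n \nmid k$, so that $d := \gcd(k, n) < n$. Applying Frobenius $k$ times to the identity at $\beta_1$ and using $V, W \in K[t]$ gives
\[
g + r \;=\; \frac{W(\beta_2)}{V(\beta_2)^x} \;=\; \left(\frac{W(\beta_1)}{V(\beta_1)^x}\right)^{2^k} \;=\; (g+r)^{2^k},
\]
so $g + r$ lies in the fixed field of $\phi^k$, namely the proper subfield ${\rm GF}(2^d)$ of $L$.

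To finish, I observe that $r \in K \subseteq {\rm GF}(2^d)$, so the above forces $g \in {\rm GF}(2^d)$, contradicting the fact that $g$ generates $\mathbb{G} = L^{\times}$, which has order $2^n - 1 > 2^d - 1$. The step I expect to need most care is the implication $|{\bf C}_u| \nmid k \Rightarrow n \nmid k$, which relies crucially on the divisibility $|{\bf C}_u| \mid n$ from Lemma \ref{lem:cyc}; once that is in hand, the rest is a routine Frobenius calculation in characteristic $2$ together with the primitivity of $g$.
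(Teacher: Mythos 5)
Your proof is correct and follows essentially the same route as the paper's: assume two distinct roots of $E$ in the Frobenius orbit ${\bf D}_u$, use $V,W\in K[t]$ and $r\in K$ to deduce that $g+r$ is fixed by a nontrivial power of Frobenius, and contradict the primitivity of $g$. The only cosmetic difference is that you phrase the final contradiction via the fixed field ${\rm GF}(2^{\gcd(k,n)})$ being a proper subfield, whereas the paper deduces $(2^n-1)\mid(2^{n-j}-2^{n-k})$ and hence $n\mid(k-j)$ directly from the order of $g$.
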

\begin{proof}
If $V(g^u)=0$, then $V(g^{2^j\cdot u})=V(g^u)^{2^j}=0$ for any
$j\in \mathbb{N}$, i.e.,
 $V(\beta)=0$ for every $\beta\in {\bf D}_u$. Otherwise, we show that $E(t)$ has at most one root in
 ${\bf D}_u$.
Due to  (\ref{eqn:Q_t}), we have that
$$E(t)=W(t)+V(t)^x(g+r).$$
 Suppose that $E(t)$ has two different roots in  ${\bf D}_u$, say $g^{u\cdot 2^j}$ and $g^{u\cdot 2^k} $, where
 $0\leq j<k <n$. Then
$$W(g^{u\cdot 2^j})+V(g^{u\cdot 2^j})^x(g+r)=0=W(g^{u\cdot 2^k})+V(g^{u\cdot 2^k})^x(g+r).$$
It follows that
$$(g+r)^{2^{n-j}}=(W(g^u)/V(g^u)^x)^{2^n}=(g+r)^{2^{n-k}}.$$
Since $r\in K$, the above equality implies
$g^{2^{n-j}}=g^{2^{n-k}}.$  Since
$g$ is primitive, we have $(2^n-1)|(2^{n-j}-2^{n-k})$. It follows that
 $n|(k-j)$, which is a
contradiction.
\end{proof}

The following lemma gives  lower bound on $\epsilon(n,g,x,g)$ for
any private key $x\in \mathbb{Z}_q$.

\begin{lemma}\label{lem:lower_bound_epsilon}
For every $x\in \mathbb{Z}_q$, we have that $\displaystyle
\epsilon(n,g,x,g)\geq 1-\frac{|U_x|}{|\mathbb{D}_{n,g,x}|}.$
\end{lemma}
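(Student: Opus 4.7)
The plan is to bound the number of ``good'' database blocks for each fixed randomness $(s,r)$ and then average. Recall that by definition of $H_{x,s,r,F,R}$, the execution parameterized by $(n,g,x,F,s,r,R)$ succeeds exactly when $V(R)\neq 0$ and $E(R)=0$, so success for a given $(s,r)$ corresponds to $R$ being simultaneously a non-root of $V(t)$ and a root of $E(t)$ lying in $\mathbb{D}_{n,g,x}$.

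The key structural input is the cyclotomic-coset decomposition of $\mathbb{D}_{n,g,x}$ given by (\ref{eqn:database_elt}), namely $\mathbb{D}_{n,g,x}=\bigcup_{u\in U_x}\mathbf{D}_u$, together with Lemma~\ref{lem:rootQ}. Fixing $s\in \mathbb{Z}_q$ and $r\in K$, Lemma~\ref{lem:rootQ} (applied with $F(t)=g$) says that inside each coset orbit $\mathbf{D}_u$, either $V$ vanishes identically on $\mathbf{D}_u$ (contributing zero to successes because then $V(R)=0$) or $E(t)$ has at most one root in $\mathbf{D}_u$ (contributing at most one success). In either case the number of $R\in\mathbf{D}_u$ with ${\bf H}_{x,s,r,g,R}=1$ is at most $1$. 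Summing over $u\in U_x$ therefore yields
\begin{equation*}
\sum_{R\in \mathbb{D}_{n,g,x}} {\bf H}_{x,s,r,g,R}\;\leq\; |U_x|,
\end{equation*}
so that
\begin{equation*}
\sum_{R\in \mathbb{D}_{n,g,x}} \bigl(1-{\bf H}_{x,s,r,g,R}\bigr)\;\geq\; |\mathbb{D}_{n,g,x}|-|U_x|.
\end{equation*}

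To finish, I would plug this pointwise bound (which holds for every $s\in\mathbb{Z}_q$ and every $r\in K$) into the formula (\ref{eqn:epsilon}) for $\epsilon(n,g,x,g)$. Summing the inequality over all $s\in\mathbb{Z}_q$ and $r\in K$ gives a total of at least $2q\bigl(|\mathbb{D}_{n,g,x}|-|U_x|\bigr)$, and dividing by the normalizing factor $2q\cdot|\mathbb{D}_{n,g,x}|$ produces exactly
\begin{equation*}
\epsilon(n,g,x,g)\;\geq\; 1-\frac{|U_x|}{|\mathbb{D}_{n,g,x}|},
\end{equation*}
as desired.

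There is no real obstacle beyond invoking Lemma~\ref{lem:rootQ}; the whole argument is a coset-by-coset counting estimate combined with the definition (\ref{eqn:epsilon}). The only subtlety to double-check is that the decomposition (\ref{eqn:database_elt}) is a disjoint union (which holds because distinct cyclotomic cosets partition $\mathbb{Z}_q$), so that summing per-coset bounds gives a bound on the total count without double-counting. Lemma~\ref{lem:Unotempty} is not required for this particular inequality, but it will matter later when turning the bound into a useful numerical lower bound on $\eta(n,g,g)$.
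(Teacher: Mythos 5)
Your proof is correct and follows essentially the same route as the paper: fix $(s,r)$, apply Lemma~\ref{lem:rootQ} coset-by-coset to see that each $\mathbf{D}_u$ contributes at most one successful $R$, sum over $u\in U_x$ using the decomposition (\ref{eqn:database_elt}), and average via (\ref{eqn:epsilon}). The paper states the per-coset bound as $\sum_{R\in \mathbf{D}_u}(1-{\bf H}_{x,s,r,g,R})\geq |{\bf C}_u|-1$ rather than $\sum_{R\in\mathbf{D}_u}{\bf H}_{x,s,r,g,R}\leq 1$, but these are the same estimate.
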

\begin{proof}
Due to  (\ref{eqn:epsilon}) and
(\ref{eqn:database_elt}), we have that
\begin{eqnarray*}
\epsilon(n,g,x,g)&=&\frac{\displaystyle\sum_{s\in
\mathbb{Z}_q}\sum_{r\in K}\sum_{R \in \mathbb{D}_{n,g,x}} (1-{\bf
H}_{x,s,r,g,{\scriptscriptstyle  R}})}{2q\cdot |\mathbb{D}_{n,g,x}|}\\
&=&
\frac{\displaystyle\sum_{s\in \mathbb{Z}_q}\sum_{r\in K}\sum_{u\in
U_x}\sum_{R\in {\bf D}_u}
 (1-{\bf H}_{x,s,r,g,{\scriptscriptstyle  R}})}{2q\cdot |\mathbb{D}_{n,g,x}|}.
\end{eqnarray*}
Let $s\in \mathbb{Z}_q$ and $r\in K$ be arbitrary.  Due to Lemma
\ref{lem:rootQ}, for every $u\in U_x$, either $V(\beta)=0$ for every
$\beta\in {\bf D}_u$, or $E(t)$ has at most one root in ${\bf D}_u$.
It follows that
$$\sum_{R\in {\bf D}_u} (1-{\bf H}_{x,s,r,g,{\scriptscriptstyle  R}})\geq |{\bf C}_u|-1.$$
Therefore,
\begin{eqnarray*}
\epsilon(n,g,x,g)\geq \frac{\displaystyle\sum_{s\in
\mathbb{Z}_q}\sum_{r\in K}\sum_{u\in U_x}(|{\bf C}_u|-1)} {2q
\cdot |\mathbb{D}_{n,g,x}|}=
 1-\frac{|U_x|}{|\mathbb{D}_{n,g,x}|}.
\end{eqnarray*}
\end{proof}

We want to bound $\epsilon(n,g,x,g)$ for various settings of $n$ and $x$.
As the first case, we suppose that $n$ is a prime and have the following lemma:

\begin{lemma}\label{lem:lower_bound_eta_for_prime_n}
If $n$ is  prime, then  $\displaystyle \epsilon(n,g,x,g)> 1-\frac{2}{n}$ for every $x\in \mathbb{Z}_q$
\end{lemma}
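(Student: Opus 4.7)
The plan is to reduce the statement to a purely combinatorial bound on $|U_x|/|\mathbb{D}_{n,g,x}|$ by quoting Lemma \ref{lem:lower_bound_epsilon}, and then to exploit the primality of $n$ through Lemma \ref{lem:cyc}(1) to control the possible sizes of the cyclotomic cosets indexed by $U_x$.

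First, Lemma \ref{lem:lower_bound_epsilon} gives $\epsilon(n,g,x,g) \geq 1 - |U_x|/|\mathbb{D}_{n,g,x}|$, so it suffices to prove the strict inequality $|U_x|/|\mathbb{D}_{n,g,x}| < 2/n$. Next, since $n$ is prime, Lemma \ref{lem:cyc}(1) forces $|{\bf C}_u| \in \{1, n\}$ for every $u \in U$. A coset has cardinality $1$ iff $2u \equiv u \pmod{q}$, which (using $q = 2^n - 1 \geq 3$) happens only for $u = 0$; every other representative yields a coset of cardinality $n$.

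I would then split into two cases according to whether $0 \in U_x$. In the first case $0 \notin U_x$, every $u \in U_x$ contributes $n$ elements to $\mathbb{D}_{n,g,x}$ via the decomposition (\ref{eqn:database_elt}), so $|\mathbb{D}_{n,g,x}| = n|U_x|$ and the ratio is exactly $1/n < 2/n$. In the second case $0 \in U_x$, we have $|\mathbb{D}_{n,g,x}| = 1 + n(|U_x| - 1)$, and since Lemma \ref{lem:Unotempty} guarantees $1 \in U_x$ in addition to $0$, we also have $|U_x| \geq 2$. A direct manipulation shows that $|U_x|/(1 + n(|U_x|-1)) < 2/n$ is equivalent to $|U_x| > 2 - 2/n$, which holds since $|U_x| \geq 2 > 2 - 2/n$.

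The main obstacle is simply spotting the clean case split around whether $0 \in U_x$; once this is identified, both cases reduce to elementary arithmetic, and all the substantive content has already been extracted in Lemmas \ref{lem:cyc}, \ref{lem:Unotempty}, and \ref{lem:lower_bound_epsilon}. No delicate estimate on $|U_x|$ itself is needed beyond the trivial lower bound $|U_x| \geq 2$ guaranteed by the presence of the two distinguished representatives $0$ and $1$.
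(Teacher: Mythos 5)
Your proof is correct and follows essentially the same route as the paper: reduce to bounding $|U_x|/|\mathbb{D}_{n,g,x}|$ via Lemma \ref{lem:lower_bound_epsilon}, use primality of $n$ to force $|{\bf C}_u|\in\{1,n\}$ with cardinality $1$ only for $u=0$, and then do the elementary arithmetic. The only cosmetic difference is that you merge the paper's cases $|U_x|=1$ and ($|U_x|>1$, $0\notin U_x$) into the single case $0\notin U_x$, which is a slight simplification.
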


\begin{proof}
Due to Lemma \ref{lem:cyc}, $|{\bf C}_u|$ divides $n$ for every
$x\in \mathbb{Z}_q$ and $ u\in U_x$. Since $n$ is prime, we have
that $| {\bf C}_u|=1$ or $n$.
\begin{enumerate}
\item If $|U_x|=1$, then $U_x=\{1\}$ due to Lemma \ref{lem:Unotempty}. It is obvious that $|{\bf C}_1|=n$. By Lemma \ref{lem:lower_bound_epsilon},  we have
$$\epsilon(n,g,x,g)\geq1-\frac{|U_x|}{|\mathbb{D}_{n,g,x}|}=1-\frac{1}{n}>1-\frac{2}{n}.$$
\item If $| U_x|>1$ and $0\in U_x$, then we have that
$$\epsilon(n,g,x,g)\geq 1-\frac{|U_x|}{|\mathbb{D}_{n,g,x}|}=1-\frac{|U_x|}{1+n(|U_x|-1)}>1-\frac{2}{n}.$$
\item If $|U_x|>1$ and $0\notin U_x$, then we have that
$$\epsilon(n,g,x,g)\geq 1-\frac{|U_x|}{|\mathbb{D}_{n,g,x}|}=
1-\frac{|U_x|}{n\cdot |U_x|}
=1-\frac{1}{n}>1-\frac{2}{n}.$$
\end{enumerate}
\end{proof}

Below we lower bound $\epsilon(n,g,x,g)$ for {\bf any} integer $n\geq 2$ and  private key $x\in
\mathbb{Z}_q$.
 For any positive integer $d|n$, we
set
 \begin{equation*}
\lambda_{x.d}=|\{u: u\in U_x~{\rm and~} {\bf C}_u {\rm~is~of~cardinality~}d\}|.
 \end{equation*}
Due to Lemma \ref{lem:Unotempty} and the requirements on database
block $R$ (imposed by Claim
 \ref{clm:correctness_restricted}),  $\lambda_x=(\lambda_{x,d})$ belongs to the following set
\begin{equation*}\label{eqn:domain}
\begin{split}
\Psi_n=\{z=(z_d)_{d|n}: 0\leq z_1\leq 1; 1\leq z_n\leq
N_2(n); \\ 0\leq z_d \leq N_2(d) {\rm~for~} d|n, 1<d<n\},
\end{split}
\end{equation*}
where the coordinates of $\lambda_x$ and $z$ are  indexed by
positive divisors of $n$. Due to Lemma
\ref{lem:lower_bound_epsilon}, we have that
\begin{equation}\label{eqn:lower_bound_epsilon_any_x}
\epsilon(n,g,x,g)\geq 1-\frac{|U_x|}{|\mathbb{D}_{n,g,x}|}=
 1-\frac{\sum_{d|n}\lambda_{x,d}}
{\sum_{d|n}d\lambda_{x,d}}.
\end{equation}
We turn to  upper bound the following function
\begin{eqnarray*}
\psi_n(z)=\frac{\sum_{d|n}z_d}{\sum_{d|n}dz_d},
\end{eqnarray*}
on $\Psi_n$. Because this  is relatively hard, we turn to upper
bound the function
\begin{eqnarray*}
\phi_n(z)=\frac{\sum_{d=1}^nz_d}{\sum_{d=1}^ndz_d},
\end{eqnarray*}
where  $z=(z_1,\ldots, z_n)$ is taken from the following set
\begin{equation*}\label{eqn:domain}
\begin{split}
\Phi_n=\{z=(z_1,\ldots,z_n): 0\leq z_1\leq 1; 1\leq z_n\leq
N_2(n); \\ 0\leq z_d \leq N_2(d) {\rm~for~} 1<d<n\}.
\end{split}
\end{equation*}
Let $\omega(n)$ be the maximum value of $\phi_n(z)$ on $\Phi_n$,
i.e.,
\begin{equation*}
\omega(n)=\max\{\phi_n(z): z\in \Phi_n\}.
\end{equation*}

\begin{lemma}\label{lem:lower_bound_large_x}
For every $x\in \mathbb{Z}_q$, we have that $\epsilon(n,g,x,g)\geq
1-\omega(n)$.
\end{lemma}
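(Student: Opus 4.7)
The plan is to reduce the bound on $\epsilon(n,g,x,g)$ to the maximization problem defining $\omega(n)$ by a simple domain-embedding argument. Concretely, I will show that $\psi_n(\lambda_x) \leq \omega(n)$ for every $x \in \mathbb{Z}_q$; the lemma will then follow immediately from inequality (\ref{eqn:lower_bound_epsilon_any_x}), which gives $\epsilon(n,g,x,g) \geq 1 - \psi_n(\lambda_x)$.

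To establish $\psi_n(\lambda_x) \leq \omega(n)$, I would first extend the tuple $\lambda_x = (\lambda_{x,d})_{d \mid n} \in \Psi_n$ to a tuple $z = (z_1,\ldots,z_n)$ indexed by all integers from $1$ to $n$, by setting
\[
z_d =
\begin{cases}
\lambda_{x,d} & \text{if } d \mid n,\\
0 & \text{if } d \nmid n.
\end{cases}
\]
I would then verify that $z \in \Phi_n$ by checking the three constraints defining $\Phi_n$ against the definition of $\Psi_n$: since $1 \mid n$ and $n \mid n$, the coordinates $z_1 = \lambda_{x,1}$ and $z_n = \lambda_{x,n}$ inherit the bounds $0 \leq z_1 \leq 1$ and $1 \leq z_n \leq N_2(n)$ directly from the definition of $\Psi_n$; and for $1 < d < n$, either $d \mid n$ (in which case $0 \leq z_d = \lambda_{x,d} \leq N_2(d)$) or $d \nmid n$ (in which case $z_d = 0 \leq N_2(d)$).

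Next I would observe that extending by zeros does not alter the value of the objective: since $z_d = 0$ for every $d$ that does not divide $n$, the sums defining $\phi_n$ collapse to sums over divisors of $n$, yielding
\[
\phi_n(z) = \frac{\sum_{d=1}^n z_d}{\sum_{d=1}^n d z_d}
         = \frac{\sum_{d \mid n} \lambda_{x,d}}{\sum_{d \mid n} d\lambda_{x,d}}
         = \psi_n(\lambda_x).
\]
Combining these observations with the definition of $\omega(n)$ as the maximum of $\phi_n$ on $\Phi_n$ gives $\psi_n(\lambda_x) = \phi_n(z) \leq \omega(n)$, and plugging this into (\ref{eqn:lower_bound_epsilon_any_x}) completes the proof.

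There is no serious obstacle here: the statement is essentially a bookkeeping reduction from the divisor-indexed optimization problem over $\Psi_n$ to the coarser integer-indexed problem over $\Phi_n$, and the only nontrivial point is confirming that the zero-extension lands inside $\Phi_n$ (which is automatic because $N_2(d) \geq 0$ for $d \geq 2$ and because $1$ and $n$ are always divisors of $n$, so the two ``boundary'' constraints of $\Phi_n$ are automatically satisfied by the corresponding $\Psi_n$ constraints).
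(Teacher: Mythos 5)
Your proof is correct and follows the same route as the paper: the paper's one-line argument ``$\omega(n)=\max\{\phi_n(z):z\in\Phi_n\}\geq\max\{\psi_n(z):z\in\Psi_n\}\geq\psi_n(\lambda_x)$'' is exactly the zero-extension embedding you spell out, combined with inequality (\ref{eqn:lower_bound_epsilon_any_x}). You have merely made explicit the verification that the paper labels ``clearly.''
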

\begin{proof}
Clearly, $\omega(n)=\max\{\phi_n(z): z\in \Phi_n\} \geq
\max\{\psi_n(z): z\in \Psi_n\}\geq \psi_n(\lambda_x)$. Due to
 (\ref{eqn:lower_bound_epsilon_any_x}), we have that
$\epsilon(n,g,x,g)\geq 1-\psi_n(\lambda_x)\geq  1-\omega(n)$ for
every $x\in \mathbb{Z}_q$.
\end{proof}

Due to Lemma \ref{lem:lower_bound_large_x}, it is sufficient  to
upper bound $\omega(n)$.
\begin{lemma}\label{lem:omega_maximality}
Suppose that $\omega(n)=\phi_n(\xi)$ for
$\xi=(\xi_1,\ldots,\xi_n)\in \Phi_n$. Then  $\xi_1= \xi_n=1$.
Furthermore, if $n\geq 3$, then there is an integer $1<h<n$ such
that $\xi_d=N_2(d)$ for every integer $1<d\leq h$ and $\xi_d=0$ for
every integer     $h<d<n$.
\end{lemma}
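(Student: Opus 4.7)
The approach I will take is to treat $\phi_n$ as a linear-fractional function on the rectangular box $\Phi_n$, so that its maximizers are governed by the signs of its partial derivatives. Writing $A(z)=\sum_{d=1}^{n}z_d$ and $B(z)=\sum_{d=1}^{n}dz_d$, one has
\[\frac{\partial \phi_n}{\partial z_d}=\frac{B(z)-dA(z)}{B(z)^2}.\]
To establish $\xi_1=1$, I will use that the constraint $z_n\geq 1$ forces $B(\xi)-A(\xi)=\sum_{d=2}^{n}(d-1)\xi_d\geq n-1>0$ at every feasible point, making the $z_1$-derivative strictly positive and pushing $\xi_1$ to its upper bound $1$. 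Once that is in hand, a symmetric computation yields $B(\xi)-nA(\xi)=-\sum_{d=1}^{n-1}(n-d)\xi_d\leq -(n-1)<0$, so the $z_n$-derivative is strictly negative and $\xi_n$ is pinned to its lower bound $1$.

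For the threshold structure (when $n\geq 3$), the central tool will be a mass-transfer perturbation. If there existed $2\leq d_1<d_2\leq n-1$ with $\xi_{d_1}<N_2(d_1)$ and $\xi_{d_2}>0$ simultaneously, I could replace $(\xi_{d_1},\xi_{d_2})$ by $(\xi_{d_1}+\delta,\xi_{d_2}-\delta)$ for a small $\delta>0$; this leaves $A$ unchanged but drops $B$ by $(d_2-d_1)\delta$, strictly increasing $\phi_n$ and contradicting maximality. Consequently, for every such pair, $\xi_{d_1}=N_2(d_1)$ or $\xi_{d_2}=0$. Taking $h$ to be the largest $d\in\{2,\ldots,n-1\}$ with $\xi_d>0$, this dichotomy forces $\xi_d=N_2(d)$ for $2\leq d<h$ and $\xi_d=0$ for $h<d<n$; the derivative condition at the borderline index $d=h$ then lets me promote $\xi_h$ up to $N_2(h)$ without changing $\phi_n(\xi)$.

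The only step I expect to require real care is ruling out the degenerate case $h=1$, i.e.\ $\xi=(1,0,\ldots,0,1)$ with $\phi_n(\xi)=2/(n+1)$. For this I plan to compare with the competing point $(1,N_2(2),0,\ldots,0,1)=(1,1,0,\ldots,0,1)$, whose value is $3/(n+3)$; the inequality $3/(n+3)>2/(n+1)$ simplifies to $n>3$, so the degenerate point is strictly suboptimal whenever $n\geq 4$. In the boundary case $n=3$ the two candidates tie at $1/2$, and I will simply select the maximizer $(1,1,1)$, which has $h=2$, among the tie. This tie-breaking is the main subtlety, and it is where the hypothesis $n\geq 3$ genuinely enters.
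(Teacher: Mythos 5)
Your proof is correct, and while it rests on the same underlying idea as the paper's --- local perturbation of a linear--fractional function at a maximizer over a box --- the execution differs in ways worth noting. For $\xi_1=\xi_n=1$ the paper computes the sign of the difference $\phi_n(0,z_2,\ldots)-\phi_n(1,z_2,\ldots)$ and $\phi_n(\ldots,z_n)-\phi_n(\ldots,1)$ directly; your observation that $B-A$ and $B-nA$ do not involve $z_1$ and $z_n$ respectively, so that the partial-derivative sign is constant along those coordinate lines, is the same computation in derivative form. For the threshold structure the paper instead perturbs a single \emph{fractional} coordinate $\xi_h$ by $\pm 1$, deduces $h=1/\omega(n)$, and only then shuttles units between coordinates; its steps (iv)--(vi) are all conditioned on the existence of such a fractional coordinate, so the case where every $\xi_d$ sits at an endpoint of its interval is not explicitly treated there. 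Your two-coordinate mass transfer (either $\xi_{d_1}=N_2(d_1)$ or $\xi_{d_2}=0$ for all $2\le d_1<d_2\le n-1$) handles the extremal and fractional cases uniformly, and your explicit treatment of the degenerate candidate $(1,0,\ldots,0,1)$ --- comparing $2/(n+1)$ with $3/(n+3)$ and resolving the $n=3$ tie by selecting $(1,1,1)$ --- fills a gap the paper glosses over. One caveat applies equally to both proofs: when $0<\xi_h<N_2(h)$ (or when $n=3$ and $\xi=(1,0,1)$), the maximizer in hand does not literally have the stated form, and both you (``promote $\xi_h$'') and the paper (``we could have taken $\xi=\ldots$'') actually prove the existence of \emph{some} maximizer of threshold form; that weaker statement is all that equation (\ref{eqn:index_h}) and the later lemmas require, so nothing downstream is affected.
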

\begin{proof}
It is trivial to verify that $\xi_1=\xi_2=1$ for $n=2$. Let $n\geq 3$.

\begin{enumerate}
\item  For every $(0,z_2,\ldots,z_n), (1,z_2,\ldots,z_n)\in \Phi_n$, it is easy to see that
\begin{eqnarray*}
\phi_n(0, z_2,\ldots, z_n)-\phi_n(1,z_2,\ldots,
z_n)<0,
\end{eqnarray*}
which implies that  $\xi_1=1$.

\item For every $(1,z_2,\ldots, z_{n-1}, z_n), (1,z_2,\ldots, z_{n-1},1)\in
\Phi_n$ (where $z_n >1$), it is easy to see that
\begin{eqnarray*}
\phi_n(1,z_2, \ldots, z_{n-1}, z_n)-\phi_n(1,z_2,\ldots,z_{n-1}, 1)
 < 0,
\end{eqnarray*}
which implies that $\xi_n=1$.

\item Suppose  $0<\xi_h<N_2(h)$ for some integer $1<h<n$.  Let
$$C_1=\sum_{d=1}^{h-1}\xi_d, C_2=\sum_{d=h+1}^n\xi_d, C_3=\sum_{d=1}^{h-1}d\xi_d,
C_4=\sum_{d=h+1}^nd\xi_d.$$ Then due to the maximality of
$\omega(n)$, we have that
\begin{eqnarray*}\label{eqn:det_h_1}
0&\geq& \phi_n(\xi_1,\ldots, \xi_h+1, \ldots, \xi_n)-\phi_n(\xi)\\
 &=&
\frac{C_3+C_4-hC_1-hC_2}{(C_3+h(\xi_h+1)+C_4)(C_3+h\xi_h+C_4)};
\end{eqnarray*}
\begin{eqnarray*}\label{eqn:det_h_2}
0&\geq& \phi_n(\xi_1,\ldots, \xi_h-1, \ldots, \xi_n)-\phi_n(\xi) \\
&=&
\frac{-C_3-C_4+hC_1+hC_2}{(C_3+h(\xi_h-1)+C_4)(C_3+h\xi_h+C_4)}.
\end{eqnarray*}
The above inequalities  imply that
$C_3+C_4=hC_1+hC_2$. Hence, we have
 \begin{eqnarray*}\label{eqn:value_h}
h=\frac{\sum_{d=1}^n d\xi_d}{\sum_{d=1}^n
\xi_d}=\frac{1}{\omega(n)}.
 \end{eqnarray*}

\item We claim that $\xi_a=N_2(a)$ for every $1<a<h$. Otherwise, by (iii), we have that $\xi_a=0$ and
\begin{eqnarray*}
\omega(n)<\phi_n(\xi_1,\hdots, \xi_{a}+1, \hdots, \xi_h-1, \hdots,
\xi_n),
\end{eqnarray*}
which is a contradiction.

\item We claim  that $\xi_b=0$ for every $h<b<n$. Otherwise, by (iii), we have that $\xi_b=N_2(b)$ and
\begin{eqnarray*}
\omega(n)< \phi_n(\xi_1,\hdots, \xi_h+1, \hdots, \xi_b-1, \hdots, \xi_n),
\end{eqnarray*}
which is a contradiction.

\item
Finally, we show that  $\omega(n)=\phi_n(1,N_2(2),\ldots, N_2(h),
0,\ldots,0,1)$. Due to (iii), (iv) and (v), we have that
$$\xi=(1,N_2(2),\ldots, N_2(h-1),\xi_h, 0,\ldots,0,1),$$
where $0<\xi_h<N_2(h)$.
Since $\phi_n(\xi)=\omega(n)\geq \phi_n(1,N_2(2),\ldots, N_2(h-1),0,
0,\ldots,0,1)$, we have
$$hC_1-C_3\leq n-h.$$  If $hC_1-C_3<  n-h$, then
\begin{eqnarray*}
\omega(n)<\phi_n(1,N_2(2),\ldots, N_2(h), 0,\ldots,0,1),
\end{eqnarray*}
which is a contradiction. Therefore, $hC_1-C_3=  n-h$.Then it is not
hard to verify that
$$\omega(n)=\phi_n(\xi)=
\phi_n(1, N_2(2), \ldots, N_2(h), 0,\ldots,0,1).$$ Therefore,  we
could have taken $\xi=(1, N_2(2), \ldots, N_2(h), 0,\ldots,0,1)$.
\end{enumerate}
\end{proof}

Due to Lemma  \ref{lem:omega_maximality}, for every integer $n\geq
3$, there is at least one integer $1<h<n$ such that
\begin{equation}\label{eqn:index_h}
\omega(n)=\phi_n(1,N_2(2),\ldots, N_2(h), 0,\ldots, 0,1).
\end{equation}
Note that the integer $h$ may be not unique. For every integer
$n\geq 3$, we define
\begin{equation}\label{eqn:define_h_n}
\begin{split}
h(n)=\min\{h: \omega(n)=
\phi_n(1,N_2(2),\ldots, N_2(h), 0,
\ldots, 0,1), \\ {\rm ~where~} 1<h<n\}
\end{split}
\end{equation}
to be the smallest integer $1<h<n$ such that
(\ref{eqn:index_h}) holds.
 Next lemma shows that $h(n)$ is an increasing  function of $n$.
\begin{lemma}\label{lem:h_n_is_increasing}
We have that $h(n+1)\geq h(n)$ for every  integer $n\geq 3$.
\end{lemma}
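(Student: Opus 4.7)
The plan is to rule out every $h$ with $2 \leq h < h(n)$ as a possible value of $h(n+1)$. Set $A(h) = \sum_{d=2}^{h} N_2(d)$ and $B(h) = \sum_{d=2}^{h} d\, N_2(d)$. Then the value of $\phi_n$ at the canonical extremal vector $(1, N_2(2), \ldots, N_2(h), 0, \ldots, 0, 1) \in \Phi_n$ is
$$f_n(h) \;=\; \frac{2 + A(h)}{1 + n + B(h)},$$
and the value of $\phi_{n+1}$ at the analogous vector in $\Phi_{n+1}$ is $f_{n+1}(h) = \frac{2 + A(h)}{2 + n + B(h)}$; the numerator is unchanged, while the denominator shifts by $+1$ because the final coordinate now lives at index $n+1$ rather than $n$. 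By Lemma~\ref{lem:omega_maximality}, $\omega(n) = \max_{2 \leq h \leq n-1} f_n(h)$ and $\omega(n+1) = \max_{2 \leq h \leq n} f_{n+1}(h)$, with $h(n)$ and $h(n+1)$ the smallest maximizers.

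The key step is an elementary monotonicity principle: \emph{if $p \geq r > 0$ and $q, s > 0$ with $p/q > r/s$, then $p/(q+1) > r/(s+1)$.} This follows from the identity $p(s+1) - r(q+1) = (ps - qr) + (p - r)$, in which the first summand is strictly positive and the second is nonnegative. I would apply this with $p = 2 + A(h(n))$, $q = 1 + n + B(h(n))$, $r = 2 + A(h)$, $s = 1 + n + B(h)$ for any fixed $h$ with $2 \leq h < h(n)$: the minimality of $h(n)$ forces $f_n(h) < f_n(h(n))$ strictly, so $p/q > r/s$, while $A(h) \leq A(h(n))$ (since $N_2(d) \geq 0$) yields $p \geq r$. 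The principle then gives $f_{n+1}(h) < f_{n+1}(h(n))$. Since $h(n) \leq n-1 \leq n$ is a legitimate candidate in the definition of $h(n+1)$, no $h < h(n)$ can achieve $\omega(n+1)$, whence $h(n+1) \geq h(n)$.

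The main obstacle is the asymmetry hidden in the last display: the naive implication $p/q > r/s \Rightarrow p/(q+1) > r/(s+1)$ is \emph{false} without further assumptions (take $p=1, q=1, r=2, s=3$, giving $1/2 = 2/4$), so one cannot simply quote the inequality $f_n(h) < f_n(h(n))$ and shift denominators. Closing this gap requires exploiting the structural monotonicity $A(h) \leq A(h(n))$ in the numerator, together with the strictness delivered by the word "smallest" in the definition of $h(n)$; those two ingredients together are precisely what makes the monotonicity principle applicable.
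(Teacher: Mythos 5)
Your proof is correct, and it takes a genuinely different route from the paper's. Writing $f_n(h)=\phi_n(1,N_2(2),\ldots,N_2(h),0,\ldots,0,1)$ as you do, the paper argues via consecutive increments: it first asserts that $f_n(l)>f_n(l-1)$ for every $2\leq l\leq h(n)$ (recorded in the equivalent threshold form $\frac{1}{l}>\frac{\sum_{d=2}^{l-1}N_2(d)+2}{\sum_{d=2}^{l-1}dN_2(d)+n+1}$, its inequality (\ref{eqn:n_implication})), then observes that enlarging the denominator from $n+1$ to $n+2$ only helps, so $f_{n+1}$ is also increasing on $\{2,\ldots,h(n)\}$ and no index below $h(n)$ can maximize $\phi_{n+1}$. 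You instead compare each $f_{n+1}(h)$ with $h<h(n)$ directly against $f_{n+1}(h(n))$ via the shifting principle $p\geq r>0$, $p/q>r/s\Rightarrow p/(q+1)>r/(s+1)$. The trade-off is real: the paper's intermediate claim that $f_n$ increases up to $h(n)$ is dismissed as ``not hard to see'' but does not follow from the bare definition of $h(n)$ as smallest maximizer; it needs a small unimodality argument (the sign of $f_n(l)-f_n(l-1)$ equals the sign of $1/l-f_n(l-1)$, and the thresholds $1/l$ decrease in $l$). Your argument needs only the strict inequality $f_n(h)<f_n(h(n))$ supplied by the minimality in (\ref{eqn:define_h_n}) together with the monotonicity $A(h)\leq A(h(n))$ of the numerators, and you correctly isolate the hypothesis $p\geq r$ as what rescues the otherwise false denominator-shift implication. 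In this sense your proof is more self-contained, at the mild cost of not producing the consecutive-increment inequality that the paper's later arguments (e.g.\ the bound $\omega(n)\geq 1/(h(n)+1)$ in Lemma \ref{lem:omega_n_is_decreasing}) lean on.
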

\begin{proof}
Due to the definition of $h(\cdot)$ by
(\ref{eqn:define_h_n}), it is not hard to see that
\begin{equation*}
\begin{split}
\phi_n(1,N_2(2),\ldots,N_2(l-1), N_2(l), 0,\ldots,0,1)>\\
\phi_n(1,N_2(2),\ldots, N_2(l-1),0, 0,\ldots,0,1)
\end{split}
\end{equation*}
 for every
integer $2\leq l\leq h(n)$. Equivalently, we have that
\begin{equation}\label{eqn:n_implication}
\frac{1}{l}>\frac{
\sum_{d=2}^{l-1}N_2(d)+2}{\sum_{d=2}^{l-1}dN_2(d)+n+1}
\end{equation}
for every integer $2\leq l\leq h(n)$. Due to 
(\ref{eqn:n_implication}), it is not hard to verify that
\begin{equation}\label{eqn:det_h_n_1}
\begin{split}
\phi_{n+1}(1,N_2(2),\ldots, N_2(l-1), N_2(l), 0,\ldots,0,1)>\\
\phi_{n+1}(1,N_2(2),\ldots, N_2(l-1),0,0,\ldots,0,1)
\end{split}
\end{equation}
for every integer $2\leq l\leq h(n)$. In particular, 
(\ref{eqn:det_h_n_1}) holds for $l=h(n)$. This implies that $h(n+1)\geq h(n)$.
\end{proof}

On the other hand, $\omega(n)$ is a decreasing function of $n$:
\begin{lemma}\label{lem:omega_n_is_decreasing}
We have that $\omega(n+1)< \omega(n)$ for every  integer
$n\geq 3$.
\end{lemma}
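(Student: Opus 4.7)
The plan is to exhibit, for each $n\geq 3$, an element $\eta\in\Phi_n$ satisfying $\phi_n(\eta)>\omega(n+1)$. Since $\omega(n)\geq\phi_n(\eta)$ by definition, this will yield $\omega(n)>\omega(n+1)$ at once. Notably, Lemma \ref{lem:h_n_is_increasing} will not be needed for the argument.

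The starting point is Lemma \ref{lem:omega_maximality} applied to $n+1$: there is an integer $h'$ with $2\le h'\le n$ such that
$$\omega(n+1)=\phi_{n+1}(1,N_2(2),\ldots,N_2(h'),0,\ldots,0,1)=\frac{a(h')+1}{b(h')+n+1},$$
where I abbreviate $a(h)=1+\sum_{d=2}^h N_2(d)$ and $b(h)=1+\sum_{d=2}^h dN_2(d)$. The proof splits on whether $h'\le n-1$ or $h'=n$. If $h'\le n-1$, the natural move is to ``shift the trailing $1$ left'': I will take $\eta=(1,N_2(2),\ldots,N_2(h'),0,\ldots,0,1)\in\Phi_n$ with the final $1$ now at coordinate $n$. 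This is admissible, its numerator is still $a(h')+1$, and its denominator drops to $b(h')+n$, so $\phi_n(\eta)>\omega(n+1)$ is immediate.

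The shift is unavailable when $h'=n$, and this is the only mildly delicate point of the proof. There I will instead take $\eta=(1,N_2(2),\ldots,N_2(n))\in\Phi_n$, which is admissible because $N_2(n)\geq 1$. Writing $A=a(n)$ and $B=b(n)$, the desired strict inequality $\phi_n(\eta)=A/B>(A+1)/(B+n+1)=\omega(n+1)$ reduces to $(n+1)A>B$, which follows from the identity $(n+1)A-B=\sum_{d=1}^n\bigl((n+1)-d\bigr)\eta_d$: every coefficient $(n+1)-d$ is at least $1$ and is applied to a nonnegative entry, with the $d=1$ term already contributing $n\cdot 1>0$. Putting the two cases together gives $\omega(n)\geq\phi_n(\eta)>\omega(n+1)$ in all situations, which is the claim.
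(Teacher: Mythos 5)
Your proof is correct, and it takes a genuinely different route from the paper's. The paper derives the monotonicity of $\omega$ from the monotonicity of $h$: it first proves $h(n+1)\geq h(n)$ (Lemma \ref{lem:h_n_is_increasing}) and then splits on whether $h(n+1)=h(n)$ or $h(n+1)>h(n)$, using in the latter case the chain $\omega(n)\geq 1/(h(n)+1)\geq 1/h(n+1)>\omega(n+1)$ coming from the definition of $h(\cdot)$. You instead use only the structural description of the maximizer (Lemma \ref{lem:omega_maximality}) applied to $n+1$, and compress that maximizer into a feasible point of $\Phi_n$ with strictly larger $\phi_n$-value: when $h'\leq n-1$ the trailing $1$ simply moves from coordinate $n+1$ to coordinate $n$, lowering the denominator while keeping the numerator, and when $h'=n$ your competitor $(1,N_2(2),\ldots,N_2(n))$ works because $(n+1)A-B=\sum_{d=1}^n((n+1)-d)\eta_d\geq n>0$. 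Both cases check out (feasibility uses only $N_2(n)\geq 1$), and the degenerate case is handled even though the numerics suggest it never actually occurs. What your argument buys is independence from Lemma \ref{lem:h_n_is_increasing} and from the exact value or minimality of $h(n)$, making this lemma self-contained and arguably shorter; what the paper's approach buys is that it reuses machinery ($h(n+1)\geq h(n)$ and the inequality $\omega(n)\geq 1/(h(n)+1)$) that it needs anyway for Lemma \ref{lem:compare_for_small_x} and Table \ref{table:h_n}, so no work is wasted in the paper's overall structure.
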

\begin{proof}
By Lemma \ref{lem:h_n_is_increasing}, we have that  $h(n+1)\geq
h(n)$.  If $h(n+1)=h(n)$, then
\begin{eqnarray*}
\omega(n+1)&=&\frac{\sum_{d=2}^{h(n+1)}N_2(d)+2}{\sum_{d=2}^{h(n+1)}dN_2(d)+n+2}
=\frac{\sum_{d=2}^{h(n)}N_2(d)+2}{\sum_{d=2}^{h(n)}dN_2(d)+n+2} \\
&<&
\frac{\sum_{d=2}^{h(n)}N_2(d)+2}{\sum_{d=2}^{h(n)}dN_2(d)+n+1}=\omega(n).
\end{eqnarray*}
If $h(n+1)>h(n)$, then
\begin{eqnarray*}
\omega(n)&=&\frac{\sum_{d=2}^{h(n)}N_2(d)+2}{\sum_{d=2}^{h(n)}dN_2(d)+n+1}\geq
\frac{1}{h(n)+1} \geq
\frac{1}{h(n+1)}\\
&>&
\frac{\sum_{d=2}^{h(n+1)}N_2(d)+2}{\sum_{d=2}^{h(n+1)}dN_2(d)+n+2}=
\omega(n+1),
\end{eqnarray*}
where the first and third inequalities follow from the
definition of   $h(\cdot)$ by 
(\ref{eqn:define_h_n}).
\end{proof}

We enumerate  the values of $h(n)$ and $\omega(n)$ for some integers $n$  in Table \ref{table:h_n}.

\begin{table}[t]
\begin{center}
    \begin{tabular}{|c|c|c||c|c|c||c|c|c|}\hline
$n$   & $h(n)$ & $\omega(n)$ & $n$  & $h(n)$ & $\omega(n)$ & $n$  & $h(n)$ & $\omega(n)$ \\ \hline\hline
2 &   1    & 0.66667    &  12 &   4    & 0.24242  & 296 &  10    & 0.09996   \\ \hline
3 &   1    & 0.50000    &  20 &   5    & 0.19718  & 522 &  11 & 0.09089\\ \hline
4 &   2    & 0.42857    &  34 &   6    & 0.16547  & 934 &  12 & 0.08332\\ \hline
5 &   2    & 0.37500    &  57 &   7    & 0.14236  & 1681 &  13 & 0.07692\\ \hline
6 &   2    & 0.33333    &  98 &   8    & 0.12478  & 3058 &  14 & 0.07143\\ \hline
7 &   3    & 0.31250    &  169 &   9    & 0.11101 & 5596 &  15 & 0.06667\\ \hline
     \end{tabular}
    \end{center}
    \caption{The values of $h(n)$ and $\omega(n)$}
    \label{table:h_n}
\end{table}

\begin{lemma}\label{lem:compare_for_small_x}
For every integer $n\geq 7$,
 we have that $\displaystyle \omega(n)\geq  \frac{5}{n+9}$.
\end{lemma}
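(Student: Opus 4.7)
The plan is to prove the bound by exhibiting an explicit element of $\Phi_n$ whose image under $\phi_n$ equals the claimed lower bound, and then invoking the maximality defining $\omega(n)$.

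Concretely, for any integer $n \geq 4$, I would consider the vector $z^{\ast} \in \mathbb{R}^n$ defined by $z^{\ast}_1 = 1$, $z^{\ast}_2 = N_2(2) = 1$, $z^{\ast}_3 = N_2(3) = 2$, $z^{\ast}_d = 0$ for every $3 < d < n$, and $z^{\ast}_n = 1$. I would verify that $z^{\ast} \in \Phi_n$ by checking each coordinate against the defining constraints: $z^{\ast}_1 = 1 \in [0,1]$; $z^{\ast}_2 = 1 = N_2(2)$ and $z^{\ast}_3 = 2 = N_2(3)$ both meet the upper bounds; all middle zero coordinates trivially satisfy $0 \leq z^{\ast}_d \leq N_2(d)$; and $z^{\ast}_n = 1 \in [1, N_2(n)]$ since $N_2(n) \geq 1$ for every $n \geq 2$. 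This membership check requires $n \geq 4$ so that the indices $1,2,3,n$ are genuinely distinct, which is comfortably satisfied under the hypothesis $n \geq 7$.

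Next, I would compute $\phi_n(z^{\ast})$ directly: the numerator is
\begin{equation*}
\sum_{d=1}^n z^{\ast}_d = 1 + 1 + 2 + 0 + \cdots + 0 + 1 = 5,
\end{equation*}
and the denominator is
\begin{equation*}
\sum_{d=1}^n d\, z^{\ast}_d = 1\cdot 1 + 2\cdot 1 + 3\cdot 2 + 0 + \cdots + 0 + n\cdot 1 = n + 9,
\end{equation*}
so that $\phi_n(z^{\ast}) = 5/(n+9)$. Since $\omega(n)$ is defined as the maximum of $\phi_n$ over $\Phi_n$, the inequality $\omega(n) \geq \phi_n(z^{\ast}) = 5/(n+9)$ follows at once.

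I do not expect any serious obstacle here: the whole argument is a single evaluation of $\phi_n$ at a cleverly chosen point, and the only substantive content is recognizing that the particular triple $(N_2(1)\text{-admissible }1,\, N_2(2)=1,\, N_2(3)=2)$ together with the mandatory $z_n = 1$ already exhausts the $5/(n+9)$ bound. One could also remark, as a sanity check against Table~\ref{table:h_n}, that the bound is tight at $n=7$ where $\omega(7)=5/16=5/(7+9)$, which both validates the choice of $z^{\ast}$ and explains why the hypothesis in the statement is $n \geq 7$ rather than a larger threshold.
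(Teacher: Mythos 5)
Your proof is correct and is essentially the paper's own argument: both evaluate $\phi_n$ at the test point $(1,1,2,0,\ldots,0,1)=(1,N_2(2),N_2(3),0,\ldots,0,1)\in\Phi_n$ and invoke the maximality defining $\omega(n)$ to get $\omega(n)\geq 5/(n+9)$. The only difference is that the paper first notes $h(n)\geq 3$ for $n\geq 7$ (via Table \ref{table:h_n} and Lemma \ref{lem:h_n_is_increasing}), a step your direct appeal to the definition of $\omega(n)$ as a maximum over $\Phi_n$ makes unnecessary.
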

\begin{proof}
Due to Table 2 and Lemma \ref{lem:h_n_is_increasing}, we have that
$h(n)\geq 3$ for every integer $n\geq 7$. It follows that
$\omega(n)\geq \phi_n(1,1,2,0,\ldots,0,1)=5/(n+9).$
\end{proof}

At last, we have the following theorem.
\begin{theorem}\label{thm:lower_bound_eta_for_any_n}
We have that
\begin{equation*}
\eta(n,g,g)\geq
\begin{cases}
1-\omega(n) & \text{{\rm if~}  $2\leq n\leq 6$ {\rm or $n\geq 7$ is composite;}} \\
1-\frac{2}{n} & \text{{\rm if~}  $n\geq 7$ {\rm is prime}.}
\end{cases}
\end{equation*}
\end{theorem}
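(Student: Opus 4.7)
The plan is to derive the theorem as a direct consequence of the two uniform-in-$x$ lower bounds on $\epsilon(n,g,x,g)$ already established, combined with the averaging identity
\begin{equation*}
\eta(n,g,g)=\frac{1}{q}\sum_{x\in\mathbb{Z}_q}\epsilon(n,g,x,g)
\end{equation*}
from (\ref{eqn:eta}). Since $\eta$ is the arithmetic mean of the $\epsilon(n,g,x,g)$, any pointwise (in $x$) lower bound on $\epsilon$ lifts automatically to a lower bound on $\eta$ of the same form.

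First I would split according to the arithmetic of $n$. In the case $n\geq 7$ prime, Lemma \ref{lem:lower_bound_eta_for_prime_n} gives $\epsilon(n,g,x,g)>1-2/n$ for every $x\in\mathbb{Z}_q$; averaging this strict inequality over $x\in\mathbb{Z}_q$ and weakening to $\geq$ yields $\eta(n,g,g)\geq 1-2/n$, which is the prime branch of the theorem. In the remaining case, where $2\leq n\leq 6$ or $n\geq 7$ is composite, I would invoke Lemma \ref{lem:lower_bound_large_x} to get $\epsilon(n,g,x,g)\geq 1-\omega(n)$ for every $x\in\mathbb{Z}_q$; the same averaging then delivers $\eta(n,g,g)\geq 1-\omega(n)$.

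There is essentially no obstacle: both required pointwise estimates are already in hand and are independent of $x$, so the proof reduces to a one-line averaging argument in each branch. The only mildly delicate point is bookkeeping with the strict versus non-strict inequality in the prime case, which is harmless because averaging a strict inequality over finitely many $x$ still gives a strict inequality, and the theorem only claims the corresponding non-strict bound. No further case analysis (for instance, comparing $1-2/n$ with $1-\omega(n)$ for prime $n\geq 7$) is needed, since the theorem lists the two branches disjointly.
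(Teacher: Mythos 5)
Your proof is correct and follows essentially the same route as the paper: both branches are obtained by averaging the pointwise-in-$x$ bounds of Lemma \ref{lem:lower_bound_eta_for_prime_n} and Lemma \ref{lem:lower_bound_large_x} over $x\in\mathbb{Z}_q$ via (\ref{eqn:eta}). The paper additionally invokes Table \ref{table:h_n} and Lemma \ref{lem:compare_for_small_x} to check that each branch states the stronger of the two available bounds, but as you correctly observe, that comparison is not needed to establish the inequalities as literally stated.
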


\begin{proof}
Table \ref{table:h_n} shows that  $\omega(n)\leq 2/n$ for every integer $2\leq n\leq 6$.
Due to Lemma \ref{lem:lower_bound_eta_for_prime_n} and
Lemma \ref{lem:lower_bound_large_x},
we have that $\epsilon(n,g,x,g)\geq \max\{1-2/n,1-\omega(n)\}=1-\omega(n)$
for $n=2,3,5$,  and $\epsilon(n,g,x,g)\geq 1-\omega(n)$ for $n=4,6$.
Due to  (\ref{eqn:eta}), we have that
$$\eta(n,g,g)=\frac{1}{q}\sum_{x\in \mathbb{Z}_q}\epsilon(n,g,x,g)\geq 1-\omega(n).$$

Due to Lemma \ref{lem:lower_bound_eta_for_prime_n}, Lemma \ref{lem:lower_bound_large_x}
and Lemma \ref{lem:compare_for_small_x},
we have that $\epsilon(n,g,x,g)\geq \max\{1-2/n, 1-\omega(n)\}=
1-2/n$ if $n\geq 7$ is prime
and
$\epsilon(n,g,x,g)\geq 1-\omega(n)$ if $n\geq 7$ is composite.
 Due to  (\ref{eqn:eta}), we have that
 $\eta(n,g,g)\geq 1-2/n$ if $n\geq 7$ is prime and
  $\eta(n,g,g)\geq 1-\omega(n)$ if $n\geq 7$ is composite.
\end{proof}

By Theorem \ref{thm:lower_bound_eta_for_any_n}, Lemma \ref{lem:omega_n_is_decreasing}
and Table \ref{table:h_n},  we see that
$\eta(n,g,g)$ is always non-negligible.
Hence, we have the following theorem.
\begin{theorem}\label{thm:BC_is_not_correct_for_g}
The restricted version  does not satisfy the correctness requirement if $F(t)=g$.
\end{theorem}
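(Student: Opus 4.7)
The plan is to combine the lower bounds of Theorem \ref{thm:lower_bound_eta_for_any_n} with the monotonicity established in Lemma \ref{lem:omega_n_is_decreasing} and the tabulated values in Table \ref{table:h_n} in order to exhibit a positive constant $c>0$ such that $\eta(n,g,g)\geq c$ for every integer $n\geq 2$. Once this uniform lower bound is in hand, the conclusion is immediate from the definition of correctness in Section \ref{sec:security_model}: a correct EPIR protocol must succeed with overwhelming probability, so the failure probability must be a negligible function of $n$, dropping eventually below $1/P(n)$ for every polynomial $P$. A failure probability bounded below by a nonzero constant is manifestly not negligible, and this is exactly what we intend to produce.

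First I would split on the form of $n$ as dictated by Theorem \ref{thm:lower_bound_eta_for_any_n}. When $n\geq 7$ is prime, the theorem directly gives $\eta(n,g,g)\geq 1-2/n\geq 5/7$. For the remaining cases (i.e., $2\leq n\leq 6$ and $n\geq 7$ composite) the theorem yields $\eta(n,g,g)\geq 1-\omega(n)$, so I need a uniform upper bound on $\omega(n)$. This is where Lemma \ref{lem:omega_n_is_decreasing} enters: it shows $\omega(n+1)<\omega(n)$ for every $n\geq 3$, and Table \ref{table:h_n} records $\omega(2)=2/3$, so $\omega(n)\leq 2/3$ for every $n\geq 2$. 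Hence $\eta(n,g,g)\geq 1/3$ throughout this range. Combining the two cases, $\eta(n,g,g)\geq 1/3$ for every $n\geq 2$.

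Since $1/3$ is a constant, $\eta(n,g,g)$ is not a negligible function of $n$, so the restricted version fails to return the expected value $F(R)$ with at least constant probability whenever $F(t)=g$. This contradicts the correctness requirement of Section \ref{sec:security_model} for the restricted version, and therefore, by the reduction recorded in Claim \ref{clm:correctness_restricted}, also for the original Bringer-Chabanne EPIR protocol.

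The main obstacle is essentially definitional rather than computational: one must be careful to translate ``failure probability is non-negligible in $n$'' into ``correctness fails'' in the sense of Section \ref{sec:security_model}. All the substantive analytic work has already been carried out in the chain of lemmas controlling $\omega(n)$ (Lemmas \ref{lem:lower_bound_epsilon} through \ref{lem:compare_for_small_x}) and in Theorem \ref{thm:lower_bound_eta_for_any_n}, so this last step really amounts to invoking the monotonicity of $\omega(n)$ together with its value at $n=2$ to obtain a single, uniform constant lower bound.
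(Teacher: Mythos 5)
Your proposal is correct and follows essentially the same route as the paper, which likewise derives Theorem \ref{thm:BC_is_not_correct_for_g} directly from Theorem \ref{thm:lower_bound_eta_for_any_n}, Lemma \ref{lem:omega_n_is_decreasing} and Table \ref{table:h_n} by observing that $\eta(n,g,g)$ is bounded below by a constant and hence non-negligible. Your version merely makes the uniform constant ($1/3$, using $\omega(2)=2/3$ from the table and the monotonicity of $\omega$ from $n=3$ onward) explicit, which is a harmless elaboration of the paper's terser argument.
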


\subsection{Extension to A Set of Polynomials}

In this section, we extend Theorem \ref{thm:BC_is_not_correct_for_g}
to a set of polynomials
 $F(t)\in L[t]$. In particular, we follow the notations in Section \ref{sec:BC_EPIR_fails_frequently_for_g} and
  show that the restricted version   does not satisfy the correctness requirement
if  $F(t)\in \mathcal{P}$, where
\begin{center}
$ \mathcal{P}=\{f(t)=\sum_{k=0}^{d}f_kt^k: \exists~ 0\leq l\leq
d{\rm ~such~that~}f_{l}\in L {\rm~is~primitive}
{\rm~and~}f_k\in K{\rm ~for~every~}k\neq l\}. $
\end{center}
Note that the polynomial $F(t)=g\in L[t]$ we studied in Section
\ref{sec:BC_EPIR_fails_frequently_for_g} is in
$\mathcal{P}$ and satisfies Lemma \ref{lem:rootQ}, which is critical
for obtaining all subsequent  lemmas and theorems.
 Next lemma shows that Lemma \ref{lem:rootQ} holds for any polynomial
$F(t)\in \mathcal{P}$ as well.
\begin{lemma}\label{lem:F_t_in_POLY_are_good}
Let  $F(t)\in \mathcal{P}$. Then for every $x\in \mathbb{Z}_q, u\in
U_x, s\in \mathbb{Z}_q$ and $ r\in K$, either $V(\beta)=0$ for every
$\beta\in {\bf D}_u$ or $E(t)$ has at most one root in ${\bf D}_u$.
\end{lemma}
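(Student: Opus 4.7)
The plan is to mimic the proof of Lemma \ref{lem:rootQ} by isolating the single ``bad'' coefficient $f_l \in L$ of $F(t)$ and exploiting Frobenius-compatibility of the remaining coefficients, which lie in $K$. The first case is unchanged: if $V(g^u) = 0$, then since $V \in K[t]$ we have $V(g^{u \cdot 2^j}) = V(g^u)^{2^j} = 0$ for every $j$, so $V$ vanishes on all of ${\bf D}_u$.

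Assume now $V(g^u) \ne 0$, and write $F(t) = f_l t^l + \widetilde F(t)$ with $\widetilde F(t) = \sum_{k \ne l} f_k t^k \in K[t]$. For $\gamma_j = g^{u \cdot 2^j}$, I would use that $r \in K$, that $f_k^{2^j} = f_k$ for $k \ne l$, and that $(x+y)^{2^j} = x^{2^j} + y^{2^j}$ in characteristic $2$, to rewrite
\begin{equation*}
F(\gamma_j) + r \;=\; f_l\,(g^{ul})^{2^j} \;+\; \bigl(\widetilde F(g^u) + r\bigr)^{2^j}.
\end{equation*}
Substituting this into $E(\gamma_j) = W(\gamma_j) + V(\gamma_j)^x (F(\gamma_j) + r) = 0$ and using $W(\gamma_j) = W(g^u)^{2^j}$, $V(\gamma_j) = V(g^u)^{2^j}$ (since $W, V \in K[t]$), then dividing by $V(g^u)^{x \cdot 2^j}$, the equation $E(\gamma_j) = 0$ becomes
\begin{equation*}
f_l \cdot (g^{ul})^{2^j} \;=\; \Delta^{2^j}, \qquad \Delta \;=\; \frac{W(g^u)}{V(g^u)^x} + \widetilde F(g^u) + r.
\end{equation*}

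Raising both sides to the $2^{n-j}$-th power and using $z^{2^n} = z$ for $z \in L$ yields $f_l^{2^{n-j}} \cdot g^{ul} = \Delta$. Suppose for contradiction that $E(t)$ has two distinct roots $\gamma_j, \gamma_k$ in ${\bf D}_u$ with $0 \le j < k < n$. Comparing the two resulting identities gives $f_l^{2^{n-j}} = f_l^{2^{n-k}}$, hence $f_l^{2^{n-j} - 2^{n-k}} = 1$. Since $f_l$ has order $q = 2^n - 1$, this forces $(2^n - 1) \mid (2^{n-j} - 2^{n-k})$; but $0 < 2^{n-j} - 2^{n-k} \le 2^n - 2$, a contradiction.

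The one step requiring care is the Frobenius computation that lets us separate $f_l (g^{ul})^{2^j}$ from the $K$-coefficient part: the whole reduction rests on every coefficient except $f_l$, together with $r$, lying in $K$, which is exactly the defining property of $\mathcal{P}$. Once this separation is in place, the rest is the same cyclotomic-order argument as in Lemma \ref{lem:rootQ}, with the role of $g$ (a primitive element) being played by $f_l$.
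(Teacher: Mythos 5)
Your proposal is correct and follows essentially the same route as the paper's proof: both isolate the single $L$-coefficient $f_l$, use Frobenius-compatibility of the $K$-coefficients (and of $r$, $V$, $W$) to reduce $E(\gamma_j)=0$ to an identity of the form $f_l^{2^{n-j}}g^{ul}=\Delta$, and then derive $f_l^{2^{n-a}}=f_l^{2^{n-b}}$, contradicting the fact that $f_l$ has order $2^n-1$. Your direct size bound $0<2^{n-j}-2^{n-k}\le 2^n-2$ is a marginally cleaner way to finish than the paper's deduction $n\mid(b-a)$, but the argument is the same.
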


\begin{proof}
If $V(g^u)=0$, then $V(g^{u\cdot 2^j})=V(g^u)^{2^j}=0$ for every
$j\in \mathbb{N}$, i.e., $V(\beta)=0$ for every $\beta\in {\bf
D}_u$. Otherwise, we have $V(\beta)\neq 0$ for every $\beta\in {\bf
D}_u$. Suppose
 $F(t)=\sum_{k=0}^d F_k t^k$, where $F_{l}\in L$ is of order $q$ and
 $F_k\in K$ for every $k\neq l$. We show that $E(t)$ has at most one root in ${\bf D}_u$, where
 $$E(t)=W(t)+V(t)^x(F(t)+r).$$
Suppose $E(t)$ has two different roots in ${\bf D}_u$, say $g^{u\cdot 2^a}$ and $g^{u\cdot 2^b}$,
where $0\leq a<b<n$. Then
$$W(g^{u\cdot 2^a})+V(g^{u\cdot 2^a})^x(F(g^{u\cdot 2^a})+r)
=0=W(g^{u\cdot 2^b})+V(g^{u\cdot 2^b})^x(F(g^{u\cdot 2^b})+r).$$ It
follows that
\begin{equation}\label{eqn:raise_to_high_power}
(F(g^{u\cdot 2^a})+r)^{2^{n-a}}=(F(g^{u\cdot 2^b})+r)^{2^{n-b}}.
\end{equation}
Let $c\in \{a,b\}$. Then it is not hard to see that
$$(F(g^{u\cdot 2^c})+r)^{2^{n-c}}=
\sum_{k=0}^{l-1}F_k g^{uk}+\sum_{k=l+1}^{d}F_k
g^{uk}+F_{l}^{2^{n-c}}g^{ul}+r.$$ Due to
(\ref{eqn:raise_to_high_power}), we have that
$F_{l}^{2^{n-a}}=F_{l}^{2^{n-b}}.$
Since $F_{l}\in L$ is primitive, we have
$(2^n-1)|(2^{n-a}-2^{n-b})$ and therefore $n|(b-a)$, which  is a
contradiction.
\end{proof}

Due to Lemma \ref{lem:F_t_in_POLY_are_good}, we note that all lemmas
and theorems subsequent to Lemma \ref{lem:rootQ} in Section
\ref{sec:BC_EPIR_fails_frequently_for_g} can be generalized  for any
polynomial $F(t)\in \mathcal{P}$. Therefore, we have that

\begin{theorem}\label{thm:BC_is_not_correct_for_gg}
The restricted version  does not satisfy the correctness requirement if $F(t)\in
\mathcal{P}$.
\end{theorem}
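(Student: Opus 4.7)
The plan is to show that the entire chain of lemmas leading to Theorem~\ref{thm:BC_is_not_correct_for_g} carries over verbatim, with $F(t)=g$ replaced by an arbitrary $F(t)\in\mathcal{P}$, because the only place in which the specific form $F(t)=g$ enters the argument of Section~\ref{sec:BC_EPIR_fails_frequently_for_g} is through Lemma~\ref{lem:rootQ}; and Lemma~\ref{lem:F_t_in_POLY_are_good} is precisely the extension of Lemma~\ref{lem:rootQ} to the class $\mathcal{P}$. So after stating this observation, the proof reduces to a careful audit of which steps are actually $F$-specific.

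First I would re-examine Lemma~\ref{lem:lower_bound_epsilon}: its proof uses only the decomposition $\mathbb{D}_{n,g,x}=\bigcup_{u\in U_x}{\bf D}_u$ together with the dichotomy ``either $V$ vanishes on all of ${\bf D}_u$ or $E$ has at most one root in ${\bf D}_u$'', the second of which is exactly the content of Lemma~\ref{lem:F_t_in_POLY_are_good}. Thus I obtain the generalized bound $\epsilon(n,g,x,F)\geq 1-|U_x|/|\mathbb{D}_{n,g,x}|$ for every $F(t)\in\mathcal{P}$. Next I would check that Lemmas~\ref{lem:lower_bound_eta_for_prime_n} through \ref{lem:compare_for_small_x} do not involve $F$ at all, since they concern only combinatorial properties of cyclotomic cosets and the purely numerical function $\omega(n)$; therefore the same conclusions transfer to $\epsilon(n,g,x,F)$ with no change.

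Having re-certified Lemmas~\ref{lem:lower_bound_epsilon}--\ref{lem:compare_for_small_x} for every $F\in\mathcal{P}$, I would then re-run the proof of Theorem~\ref{thm:lower_bound_eta_for_any_n} verbatim to obtain
\begin{equation*}
\eta(n,g,F)\geq
\begin{cases}
1-\omega(n) & \text{if } 2\leq n\leq 6 \text{ or } n\geq 7 \text{ is composite,}\\
1-\tfrac{2}{n} & \text{if } n\geq 7 \text{ is prime,}
\end{cases}
\end{equation*}
for every $F\in\mathcal{P}$. Combined with Lemma~\ref{lem:omega_n_is_decreasing} and Table~\ref{table:h_n}, this lower bound is non-negligible in $n$, violating the overwhelming-probability requirement of correctness. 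The conclusion of Theorem~\ref{thm:BC_is_not_correct_for_gg} then follows immediately.

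The only potential obstacle is making sure no step in Section~\ref{sec:BC_EPIR_fails_frequently_for_g} secretly uses the numerical value $F=g$ rather than the conclusion of Lemma~\ref{lem:rootQ}; but a line-by-line inspection shows that $F$ is referenced only inside the proof of Lemma~\ref{lem:rootQ}, so the generalization is really a matter of bookkeeping once Lemma~\ref{lem:F_t_in_POLY_are_good} is in hand. Thus the proof of Theorem~\ref{thm:BC_is_not_correct_for_gg} can be given in a single sentence appealing to Lemma~\ref{lem:F_t_in_POLY_are_good} plus the fact that every argument in Section~\ref{sec:BC_EPIR_fails_frequently_for_g} beyond Lemma~\ref{lem:rootQ} is $F$-independent.
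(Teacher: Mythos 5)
Your proposal is correct and follows exactly the paper's own route: the paper likewise observes that Lemma~\ref{lem:F_t_in_POLY_are_good} is the only $F$-dependent ingredient, replacing Lemma~\ref{lem:rootQ}, and that every lemma and theorem after Lemma~\ref{lem:rootQ} in Section~\ref{sec:BC_EPIR_fails_frequently_for_g} then generalizes to any $F(t)\in\mathcal{P}$ unchanged. Your line-by-line audit of which steps actually touch $F$ is in fact more explicit than the paper's one-sentence justification.
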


\subsection{Extension to Any Characteristic $p>2$}
We have stressed in Section \ref{sec:notations} that our methodology
is applicable when the characteristic of all related finite fields
is any prime $p$. For example, it is obvious that
we have an analog of Lemma \ref{lem:lower_bound_large_x} for any
characteristic $p>2$.
Let $\omega_p(n)$ be an analog of the
function $\omega(n)$ when the characteristic of all related finite
fields is a prime $p>2$.
Then the following theorem holds as
well.
\begin{theorem}\label{thm:lower_bound_eta_for_any_n_p}

We have that $\eta(n,g,g)\geq 1-\omega_p(n)$ for every integer
$n\geq 2$, where $g\in {\rm GF}(p^n)$ is primitive and $p$ is an
arbitrary prime number.
\end{theorem}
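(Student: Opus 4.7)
The plan is to transport the entire chain of lemmas in Section \ref{sec:BC_EPIR_fails_frequently_for_g} to arbitrary characteristic $p$, replacing the characteristic-$2$ Frobenius $\beta\mapsto\beta^2$ everywhere with the general Frobenius $\beta\mapsto\beta^p$. Concretely, I would first redefine cyclotomic cosets mod $q=p^n-1$ as $\mathbf{C}_u=\{u\cdot p^k\bmod q:k\geq 0\}$; each coset cardinality divides $n$, and the number of cosets of cardinality $d\mid n$ equals $N_p(d)$, the number of monic irreducible polynomials of degree $d$ in $K[t]$. The sets $\mathbf{D}_u$, $U_x$ and the decomposition (\ref{eqn:database_elt}) then have direct analogs, and Lemma \ref{lem:Unotempty} still gives $1\in U_x$ since $D(t)\in K[t]$ has $g$ as a root of Frobenius orbit length $n$.

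The core technical step is to prove the analog of Lemma \ref{lem:rootQ} (and Lemma \ref{lem:F_t_in_POLY_are_good}) for characteristic $p$: for every $x\in\mathbb{Z}_q$, $u\in U_x$, $s\in\mathbb{Z}_q$ and $r\in K$, either $V$ vanishes on all of $\mathbf{D}_u$ or $E(t)$ has at most one root in $\mathbf{D}_u$. The first alternative follows from $V\in K[t]$ and the Frobenius. For the second, suppose $E(t)$ has two distinct roots $g^{u\cdot p^a},g^{u\cdot p^b}\in\mathbf{D}_u$ with $0\le a<b<n$. Since $W,V\in K[t]$ and $r\in K$ are all fixed by the $p$-power map, raising the two relations $W+V^x(F+r)=0$ to suitable Frobenius powers reduces (after stripping the $K$-coefficients of $F$) to $F_l^{p^{n-a}}=F_l^{p^{n-b}}$, where $F_l\in L$ is the primitive coefficient. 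Primitivity of $F_l$ forces $(p^n-1)\mid(p^{n-a}-p^{n-b})$, and since $\gcd(p,p^n-1)=1$ this yields $n\mid(b-a)$, a contradiction.

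With this key lemma in place, the rest of the argument is formal. The bound $\sum_{R\in\mathbf{D}_u}(1-\mathbf{H}_{x,s,r,g,R})\ge |\mathbf{C}_u|-1$ still holds, giving $\epsilon(n,g,x,g)\ge 1-|U_x|/|\mathbb{D}_{n,g,x}|$. Defining $\omega_p(n)$ as the maximum of $\phi_n(z)=\sum_d z_d/\sum_d dz_d$ over the characteristic-$p$ analog of $\Phi_n$ (with constraints $0\le z_1\le 1$, $1\le z_n\le N_p(n)$ and $0\le z_d\le N_p(d)$), the analog of Lemma \ref{lem:lower_bound_large_x} produces $\epsilon(n,g,x,g)\ge 1-\omega_p(n)$ for every $x\in\mathbb{Z}_q$. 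Averaging over $x$ via (\ref{eqn:eta}) gives $\eta(n,g,g)\ge 1-\omega_p(n)$, as claimed. The main obstacle is essentially bookkeeping: verifying that no step in the $p=2$ proof secretly used the characteristic beyond the Frobenius structure, and in particular checking cleanly that the primitivity argument collapsing the two-root hypothesis still goes through via $\gcd(p,p^n-1)=1$; the combinatorial optimization over $\phi_n$ and the monotonicity properties of the analogs of $h(n)$ and $\omega(n)$ depend only on the partition of $n$ by divisors and the bounds on $N_p(d)$, hence carry over unchanged.
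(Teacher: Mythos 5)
Your proposal is correct and follows essentially the same route as the paper, which itself only asserts the theorem by noting that the whole Section 4.4 argument is characteristic-independent; you simply make explicit the details the paper leaves implicit (the $p$-power Frobenius cosets, the analog of Lemma \ref{lem:rootQ} via $(p^n-1)\mid(p^{n-a}-p^{n-b})\Rightarrow n\mid(b-a)$, and the transported optimization defining $\omega_p(n)$). The only bookkeeping you should note is the sign convention in odd characteristic (e.g.\ $D(t)=G(t)^x-Y(t)$ and $E(t)=W(t)-V(t)^x(F(t)+r)$), which changes nothing in the argument.
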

It follows that Theorem \ref{thm:BC_is_not_correct_for_gg} also
holds when the  characteristic of all related finite fields is any
prime  $p>2$.

\section{Conclusion}

In this paper, we  show that the Bringer-Chabanne EPIR protocol does not satisfy the correctness requirement.
 To simplify the argument, we give a restricted version of  the Bringer-Chabanne EPIR protocol   . If the original protocol satisfies the correctness requirement, then so does the restricted version.
We  show that the restricted version fails frequently   if the
polynomial to be evaluated has some special property. This allows us
to get the expected conclusion, i.e., the Bringer-Chabanne EPIR
protocol    does not satisfy the correctness requirement.

\end{document}